\newcommand{\p}[1]{\mathop{\mbox{\it p} } }
\renewcommand{\vec}[1]{\ensuremath{\boldsymbol{#1}}}
\newcommand{\be}{\begin{equation}}
\newcommand{\ee}{\end{equation}}
\newcommand{\ba}{\begin{array}}
\newcommand{\ea}{\end{array}}
\newcommand{\bea}{\begin{eqnarray}}
\newcommand{\eea}{\end{eqnarray}}
\newcommand{\bean}{\begin{eqnarray*}}
\newcommand{\eean}{\end{eqnarray*}}
\newcommand{\rmh}{^{\dag}}
\definecolor{white}{rgb}{1,1,1}
\newtheorem{theorem}{Theorem}
\newtheorem{lemma}{Lemma}
\newtheorem{property}{Property}
\newtheorem{corollary}{Corollary}
\NewDocumentCommand{\MeijerG}{smmmm}
 {
  \IfBooleanTF{#1}
   {
    \vic_meijerg:nnnnnn { #2 } { #3 } { #4 } { #5 } { small } { }
   }
   {
    \vic_meijerg:nnnnnn { #2 } { #3 } { #4 } { #5 } { } { \; }
   }
 }
\begin{document}

\title{On Ergodic Capacity and Optimal Number of Tiers in UAV-Assisted Communication Systems}
\author
{
Sha Hu, \textit{Member, IEEE} 
\thanks{The author is with the Department of Electrical and Information Technology, Lund University, Lund, Sweden (email: sha.hu@eit.lth.se). }

}

\maketitle

\begin{abstract}
In this paper, we consider unmanned aerial vehicle (UAV) assisted communication systems where a number of UAVs are utilized as multi-tier relays between a number of users and a base-transceiver station (BTS). We model the wireless propagation channel between the users and the BTS as a Rayleigh product channel, which is a product of a series of independent and identically distributed (i.i.d.) Rayleigh multi-input multi-output (MIMO) channels. We put a special interested in optimizing the number of tiers in such UAV-assisted systems for a given total number of UAVs to maximize the ergodic capacity. To achieve this goal, in a first part we derive a lower-bound in closed-form for the ergodic capacity which is shown to be asymptotically tight as signal-to-noise ratio (SNR) increases. With the derived bound, in a second part we analyze the optimal number of UAV-tiers, and propose a low-complexity procedure that significantly reduces the search-size and yields near-optimal performance. Moreover, asymptotic properties both for the ergodic capacity of Rayleigh product channel, and the optimal solutions on number of tiers are extensively analyzed.

\end{abstract}

\begin{IEEEkeywords}
Unmanned aerial vehicle (UAV), Rayleigh product channel, ergodic capacity, upper-bound, lower-bound, signal-to-noise ratio (SNR), asymptotic properties, multi-tier relay, integer partition, optimization.
\end{IEEEkeywords}

\section{Introduction}
Unmanned aerial vehicles (UAVs) have gained much attention in advanced communication systems \cite{H18UAV, CG17, CE17, ZL16, MP15, BY16, LY18, MD16, MD18, CH17}. Due to the advantages such as mobility, flexibility, efficiency, and low-cost in deployment, cellular-connected UAVs have potentials in 5G beyond and IoT \cite{MD18, CB18, H18LIS, H18LIS1} systems. UAVs can be integrated into multi-tier relay networks as amplify-and-forward (AF) and decode-and-forward (DF) nodes to increase data-throughput and received signal-to-noise ration (SNR) \cite{CG79, WJ05, HA03, FB09, CK10}. This is particular helpful for sudden appearances of massive connections such as in a stadium or an outdoor concert,  and also for cellular users that are far away from a base-transceiver station (BTS) or obstructed by surrounding objects. UAVs can boost the connections in these circumstances by means of amplifying and beaming the signals.

Compared to a traditional terrestrial relay system, a UAV-assisted relay system is very flexible and has the capability to adapt its deployment according to real-time situations to maximize the performance. Further, the UAVs can appear anywhere anytime when there is an assignment, which makes it a powerful assistance to traditional cellular systems. By exploiting the flexibility in deployment, in this paper we consider multiple UAV-tier assisted communication systems in cellular networks. The target is to optimize the deployment for a given total number of UAVs through maximizing the ergodic capacity under different practical scenarios. 

Previous works on UAV assisted cellular networks can be referred to  e.g., \cite{BY16, LY18, MD16, MD18, CH17}. The potential and challenges of using UAVs in cellular networks were discussed in \cite{BY16, MD16, MD18, CH17}, as well as initial performance evaluation and trade-offs. The energy-efficiency and power control of UAVs were considered in \cite{WW18} and \cite{AY17}. The channel modeling and measurement of the air-to-ground (A2G) and air-to-air (A2A) communication channels were carried out in \cite{KD18, MS17, AJ14}. One observation from \cite{MS17, AJ14} is that both A2G and A2A channels will not always contain a line-of-sight (LoS) component. Especially in urban environment and for low-altitude UAVs, there are rich reflections and diffractions by surface-based obstacles such as tall-buildings, terrain, trees, and the UAV itself. For this fact and also for analytical tractability, in this paper we model the channels of considered UAV-assisted relay systems as independent and identically distributed (i.i.d.) Rayleigh channel. Although we do not consider other channel models, the analysis in this paper can be applied as a basis for studies on other channels. For instance, under the cases that there is LoS which yields Rician fading \cite{MS17}, the communication property obtained based only on the i.i.d. assumption can still apply. Moreover, under the cases that UAVs are deployed in rural areas or with high altitude, the LoS component becomes a dominant factor and an optimal deployment of UAVs is to minimize the distance between two adjacent tiers \cite{WB03, KA06} by using a single UAV at each tier.

By modeling the multi-input and multi-output (MIMO) channels between different UAV-tiers as i.i.d. Rayleigh fading, the effective propagation channel between the transmitting users and the receiving BTS is modeled as a Rayleigh product channel. The Rayleigh product channel origins from a double-scattering model \cite{GP02}, which comprises two i.i.d. channel components. Literatures addressing the achievable rate and diversity-multiplexing trade-off under such channels can be seen e.g.  in \cite{ZS04, JG08}. Latter, Rayleigh product channel is extended to comprise three channel components in \cite{FM10}, and then eventually to an ensemble of arbitrary $K$ i.i.d. Rayleigh MIMO channels as in \cite{YB07, AK13, RK14, F14}. Although eigenvalue statistics and ergodic capacity have been discussed in \cite{AK13, RK14, R02} for a Rayleigh product channel with $K$ tiers (in our case, $K\!-\!1$ UAV-tiers and one last BTS-tier), the results are based on hyper-geometric \textit{Meijer} $G$-function \cite{ET55} which is difficult to analyze. Capacity results for normal Rayleigh MIMO channels\footnote{By a normal Rayleigh MIMO channel, we refer to a direct MIMO channel between the users and the BTS without UAV, i.e., $K\!=\!1$.} can be reviewed as a special case with $K\!=\!1$ \cite{T99, AK13}. 

To optimize the number of tiers for a given total number\footnote{To simplify the description, we assume that both users and UAVs are equipped with a single-antenna. The cases that a user or a UAV is equipped with multiple antennas following similar analysis by treating each antenna as a separate user or UAV, respectively.} of $M$ UAVs, in principle one needs to evaluate all the integer partition sets of $M$ to find a partition set that maximizes the ergodic capacity $\tilde{R}$. With the analytical-form of $\tilde{R}$ in \cite{AK13}, calculating it for all partition sets requires extensive numerical computations or look-up-table operations, which renders a high cost and processing latency in real-time applications\footnote{As what becomes clear latter, the optimal number of UAV-tiers changes under different settings such as the numbers of antennas of the users and the BTS, the transmitting power, and the power attenuation factor. Therefore, the UAVs may need the capability to adapt to different practical scenarios.}. To simply the expression of $\tilde{R}$, one direct approach is to approximate it with an upper-bound through Jessen\rq{}s inequality. Such an obtained upper-bound can be tight when dimensions of the MIMO channel are sufficiently large, such as with traditional massive MIMO systems~\cite{T99, HT04}. However, with Rayleigh product channel the upper-bound becomes loose, due to the fact that the approximation errors of the upper-bound increases when the total number of tiers (i.e., the number of component random matrices in the Rayleigh product channel) increases. Therefore, finding other tight bound of $\tilde{R}$ is of interest.

For a given setting of UAV-tiers, the UAV-based relay system is similar to a traditional terrestrial relay system with the same settings. However, to our best knowledge, there is little work on considering optimizing the ergodic capacity $\tilde{R}$ under Rayleigh product channels. Previous works considering approximating and asymptotic properties of $\tilde{R}$ in multi-tier terrestrial wireless relay systems can be found in e.g., \cite{FB09, HA03,CK10, LS12, LL10, M14, NL11, S04}. But these works either consider the case that each tier has only a single-antenna relay \cite{FB09, HA03,CK10}, or there is only a single intermediate tier (the case when $K\!=\!2$) \cite{LL10}. Therefore, our analysis on ergodic capacity in the first part is also meaningful for traditional relay systems. There are also works consider the optimal of number of tiers in multi-tier terrestrial relay systems from different perspective. The authors in \cite{LS12} and \cite{M14} consider optimal power allocation and relay placements for multi-tier systems. In \cite{NL11}, the authors consider the optimal number of hops in a linear multi-tier AF relay model with maximizing a random coding error exponent (RCEE) instead of achievable rates. However, the expression of RCEE is also complex which make a direct optimization difficult. In \cite{S04}, the authors consider the optimum number of hops with time division multiple access (TDMA) multi-tier transmissions, which is optimized to minimize the transmission power for a given end-to-end rate.

In \cite{H18UAV}, we have derived a tight lower-bound of $\tilde{R}$ for the product of two Rayleigh MIMO channels, i.e., a single UAV-tier assisted communication system. We analyze trade-offs between the number of antennas and the transmit power of the UAV-tire in order to have higher ergodic capacity than a direct connection between the uses and the BTS. Following \cite{H18UAV}, we consider tight approximation of the ergodic capacity for multi-tier UAV assisted communication systems, and the optimization of number of tiers that maximizes $\tilde{R}$ for a given $M$ UAVs. We point out that we only consider the cases that the distance between adjacent UAV-tiers are relatively far and the channel can be model as Rayleigh fading such as in \cite{FB09, HA03,CK10, LS12, LL10, M14, NL11, S04}. Further, although we consider uplink transmission from users to the BTS, the analysis also applies to downlink transmission due to the channel reciprocity.

Although TDMA transmission can be used to mitigate cross-talks among UAV-tiers, one drawback is that $\tilde{R}$ is linearly scaled down by $K$. This renders the outcome that as SNR increases, the optimal number of tiers quickly decreases to 1 \cite{S04, OS06}. In our considered UAV-assisted system, we assume that the communications among UAV-tiers use approaches such as frequency-division multiplexing access (FDMA) \cite{MD16} (i.e., different tiers transmit on different frequency bands) or code-division multiplexing access (CDMA) \cite{LD13} (i.e., different tiers use orthogonal codes to spread transmit data). Further, with a pipelined transmission scheme on top of that, the number of tiers $K$ has negligible impact on the ergodic capacity \cite{FM10, NK04}, at a cost of wider bandwidth which can use free WIFI frequency band such as at 2.4 GHz or other bandwidth dedicated  for UAV communications.

Assuming there are $N_0$ users are connecting to a BTS with $N_K$ receiving antennas through $M$ UAVs, which are know aforehand. There are many integer partition sets of $M$ with
\bea \label{partM} \sum\limits_{k=1}^{K-1}N_k\!=\!M.\eea
With each partition scheme in (\ref{partM}) and together with $N_0$ and $N_K$, we form a UAV-assisted communication system with $K$ tiers according to a parameter setting ($N_0,\,N_1,\,\cdots,\,N_K$), where $N_k$ denotes the number of UAVs at the $k$th tier. Since the spatial multiplexing gain is determined by the minimum value of $N_k$ ($0\!\leq\!k\!\leq\!K$), it is not always optimal to put all UAVs in a single UAV-tier, i.e., setting $K\!=\!2$. On the other hand, there can be multiple schemes in (\ref{partM}) that have the same spatial multiplexing gain, but render different gains in terms of power attenuation and information-rate. With the derived lower-bound, these trade-offs can be directly evaluated and based on which, the number of tiers can be optimized to maximize the ergodic capacity $\tilde{R}$.

The main contribution of this paper are as follows:
\begin{itemize}
\item We derive a lower-bound of the ergodic capacity $\tilde{R}$ for Rayleigh product channel that comprises arbitrary $K$ i.i.d. rectangular Rayleigh MIMO channels with arbitrary dimensions. We show that the lower-bound is asymptotically tight as SNR increases and has a much simpler closed-form than its original form.
\item We show that the approximation error $\Delta\epsilon$ of the trivial upper-bound by switching the order of expectation operation and \lq\lq{}$\ln\!\det$\rq\rq{} function, asymptotically satisfies
\bea \label{eq2} \Delta\epsilon>N_0\sum_{k=1}^{K}\frac{1}{2N_k},\eea
which increases when $K$ increases, and $N_0$ is the minimum value of $N_k$ ($0\!\leq\!k\!\leq\!K$)
\item We show the differences between different settings of Rayleigh product channels such as with rectangular or square MIMO components. As a special case, adding an extra antenna\footnote{Note that, this result is only for Rayleigh product channel. For UAV-assisted cases, the impact on the received SNR also needs to be considered.} to the $k$th tier whose original number of antennas is $N_k$ can bring an increment to the ergodic capacity as
\bea \label{eq3}  \Delta\tilde{R}=\sum_{r=1}^{N_0}\frac{1}{N_k-\ell+1}.\eea
\item We analyze the optimal number of tiers for a given total $M$ UAVs with the derived bounds of $\tilde{R}$, and we show that the lower-bound based optimization is close-to-optimal and has much less computational-cost. Further, we propose an effective algorithm that significantly reduces the size of searching sets in the procedure, where we show that in general the optimal number of UAV-tier $K$ is
\bea \label{Ksub} K=\max\left(1+\biggl\lfloor \frac{M}{\min\{N_0, N_K\}}\biggr\rfloor,\;2\right).   \eea
\item We also analyze the asymptotic properties of the solutions and show that under low and high SNR cases, using a single UAV-tier ($K\!=\!2$) and setting each tier with a single UAV ($N_k\!=\!1$,$0\!<\!k\!<\!K$) is optimal for the two extreme cases, respectively.
\end{itemize}

The organization of the paper is as follows. In Section II, we briefly introduce the Rayleigh product channel model and the ergodic capacity. We also show a communication property between the tiers, and the high SNR property in Theorem 1. In Section III, we derive upper and lower bounds of the ergodic capacity and analyze the differences between them. We show the asymptotic properties of the lower-bound, and compare the ergodic capacity differences for different parameter settings of the Rayleigh product channel. In Section VI, we consider the number of tier optimization in the UAV-assisted systems, and propose a low-complexity algorithm which is shown to be effective. Simulation results are presented in Section V, and Section VI summarizes the paper.

\subsubsection*{Notation}
Throughout the paper, a capital bold letter such as $\vec{A}$ represents a matrix, a lower case bold letter $\vec{a}$ represents a vector, and matrix $\vec{I}$ represents an identity matrix. The superscripts $(\cdot)^\dag$ denotes the conjugate transpose of a matrix, and $(\cdot)^{-1}$ is the inverse. Further, $\ln(\cdot)$ is the natural logarithm function, $\det(\cdot)$ is the determinant, $\lfloor \cdot\rfloor$ and $\mod(\cdot)$ denotes the floor and modulo operations, respectively. In addition,  $\mathbb{E[\cdot]}$ is the expectation operator, $\mathrm{Tr(\cdot)}$ takes the trace of a matrix, and $\min(\cdot)$ takes the minimum of inputs.

\begin{figure*}[t]
\vspace*{-25mm}
\begin{center}
\hspace*{-30mm}
\scalebox{0.36}{\includegraphics{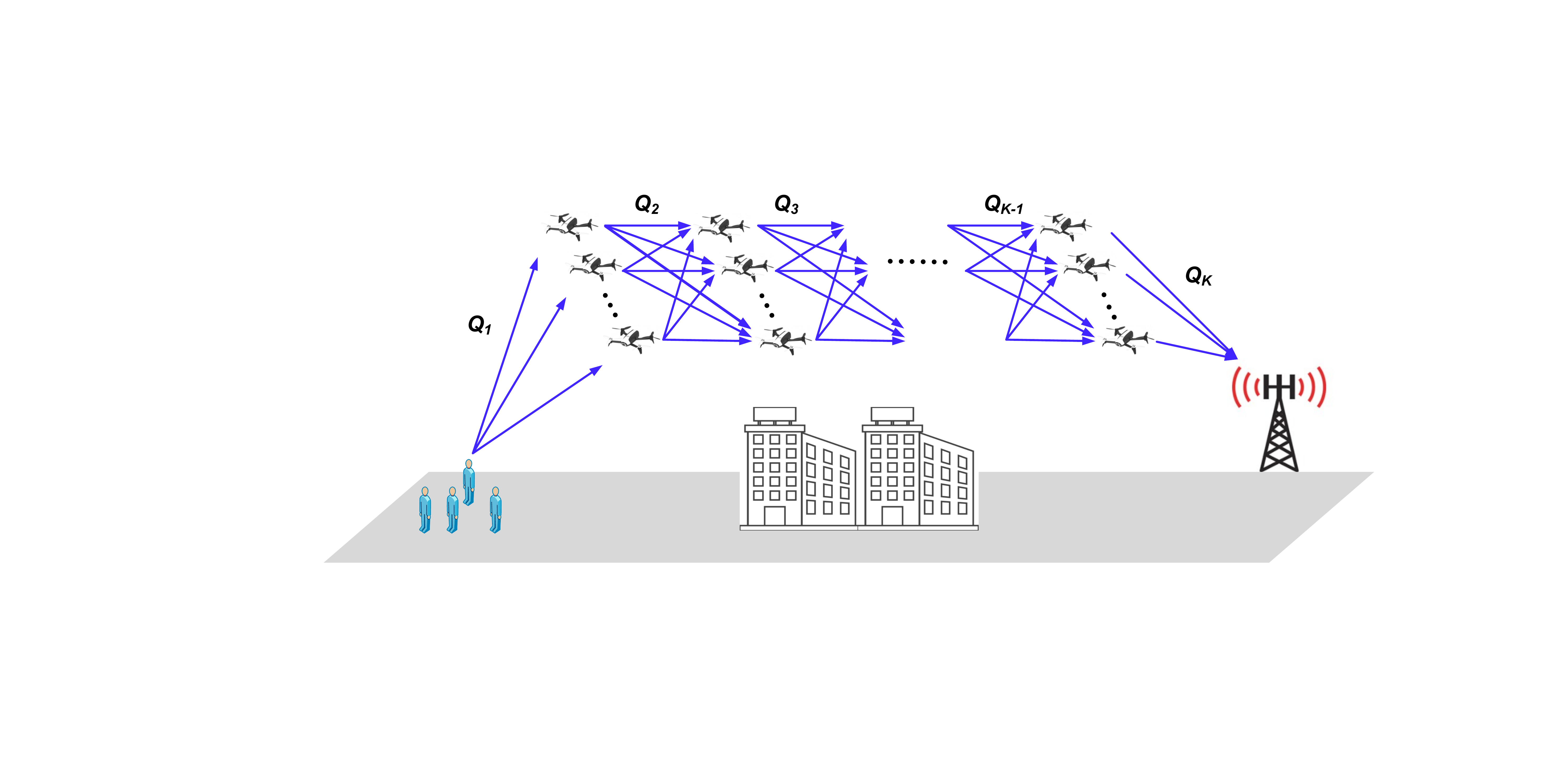}}
\vspace*{-34mm}
\caption{\label{fig1}Rayleigh product channel model in a UAV-assisted communication system with $K$ tiers, where the first $K\!-\!1$ tiers are the UAVs and the last tier is the BTS. The channels $\vec{Q}_k$ between adjacent tiers are modeled as i.i.d. Rayleigh MIMO channels, multiplying with the factors of power attenuations.}
\vspace*{-6mm}
\end{center}
\end{figure*}

\section{Preliminaries}

\subsection{Rayleigh Product Channel}

Consider a MIMO received signal model
\bea \label{md1} \vec{y}=\sqrt{q}\vec{H}{x}+\vec{n}, \eea
where $\vec{x}$ is the transmitted symbols from one or multiple users, and the transmit power\footnote{Although we call $q$ the transmit power, it however, denotes the combined impact of the transmit power, the power-amplifying in all UAV tiers, and the propagation losses.} is denoted as $q$. The channel $\vec{H}$ is of size $N_K\!\times\!N_0$, where $N_K$ denotes the number of receive antennas at a BTS, and $N_0$ is the total number of transmit antennas for users transmitting to the BTS simultaneously. For simplicity, we model $\vec{n}$ as additive Gaussian white noise (AWGN) with zero-mean and unit-variance. 

In a UAV-assisted communication system such as depicted in Fig. 1, the channel $\vec{H}$ is modeled as a Rayleigh product model. That is, each tier of the UAVs as depicted is assumed to be an independent scatter that beams the received signal from the previous tier (and with possible power-amplifying) to the next tier until it reaches the BTS. In other words, we assume
\bea \label{Hmd} \vec{H}=\vec{Q}_K\times\vec{Q}_{K-1}\times\cdots\times\vec{Q}_1=\prod\limits_{k=1}^K\vec{Q}_k, \eea
where $\vec{Q}_k$ are of size $N_k\!\times\!N_{k-1}$, and comprise of i.i.d. complex-valued Gaussian elements with zero-mean and unit-variance. Therefore, a parameter setting $(N_0,\,N_1,\,\cdots,\,N_K)$ uniquely determines the structure of $\vec{H}$. Note that in the rest of the paper, when we use the term $\prod\limits_{k=1}^K\vec{Q}_k$, it is always refereed to the multiplexing order in (\ref{Hmd}).  Further, in order to model the UAV-assisted communication systems with the received signal model (\ref{md1}), we assume that the received noise is white, which can be due to the fact that the noise power at each UAV is negligible compared to the received signal.

\subsection{A Communication Property of the Ergodic Capacity}

The capacity (nats per channel use) corresponding to the received signal model (\ref{md1}) equals
{\setlength\arraycolsep{1pt} \bea \label{R} R_{\left(\vec{Q}_1, \vec{Q}_2, \dots,\vec{Q}_K\right)}&=&\ln\!\det\!\left(\vec{I}\!+\!q\vec{H}\rmh\vec{H}\right) \notag \\
&=&\ln\!\det\!\left(\!\vec{I}\!+\!q\left(\prod\limits_{k=1}^K\vec{Q}_k\!\right)\rmh\!\left(\prod\limits_{k=1}^K\vec{Q}_k\right)\!\!\right)\!\!, \notag \\ \eea
\hspace{-1.2mm}and the ergodic capacity is
\bea \label{tR} \tilde{R}= \mathbb{E}\!\left[R_{\left(\vec{Q}_1, \vec{Q}_2, \dots,\vec{Q}_K\right)}\right]\!,\eea
where the expectation is taken over the probability density function (pdf) of $\vec{Q}_k$. Since different UAVs tier are independent from the others, we assume that \cite{G63}
\bea p\big(\vec{Q}_1, \vec{Q}_2, \dots,\vec{Q}_K\big)&=&\prod\limits_{k=1}^K p\big(\vec{Q}_k\big) \notag \\
&=&\prod\limits_{k=1}^K\frac{\exp\Big(\!\!-\!\text{Tr}\big\{Q_{k}\rmh Q_{k}\big\}\Big)}{\pi^{N_{K\!-\!1}N_K}}.\;\eea

We first state Property 1 that shows that permuting $N_k$ in a Rayleigh product channel will not change the ergodic capacity $\tilde{R}$, which is known as a weak commutation property for a product of i.i.d. random matrices in \cite{RK14, AK13}.
\begin{property}
The ergodic capacity $\tilde{R}$ of the Rayleigh product channel (\ref{md1}) is invariant under permutations of $(N_0, N_1, \cdots, N_K)$. 
\end{property}
\begin{proof}
See Appendix A.
\end{proof}

With Property 1, the analysis of $\tilde{R}$ for Rayleigh product channel is significantly simplified, as the order of $N_k$ is independent from the achieved ergodic capacity. With proper permutations we can always assume $N_0\!\leq\!N_1\!\leq\cdots\leq\!N_K$ when analyzing the properties of ergodic capacity. 

Letting $N_0\!=\!\min\limits_{0\leq k \leq K}\{N_k\}$, at high SNR\footnote{By high SNR we mean that either $q$ or the product $\prod\limits_{k=1}^K N_k$ is large, since the mean-value of the diagonal elements in $q\vec{H}\rmh\vec{H}$ equals $q\prod\limits_{k=1}^K N_k$. } it holds that
\bea \label{tRh} \tilde{R}&\approx&N_0\ln q+\mathbb{E}\!\left[\ln\det\!\left(\vec{H}\rmh\vec{H}\right)\right]\!.\eea

\begin{lemma}
For a Rayleigh product channel $\vec{H}$ in (\ref{Hmd}), it holds that
\bea \label{lemeq} \mathbb{E}\!\left[\ln\det\!\left(\!\vec{H}\rmh\vec{H}\right)\right]\!=\! \sum_{k=1}^{K}\mathbb{E}\!\left[\ln\det\!\left(\!\hat{\vec{Q}}_k\rmh\hat{\vec{Q}}_k\right)\!\right]\!, \eea
where $\hat{\vec{Q}}_k$ are i.i.d. random Rayleigh MIMO channels with dimensions $N_k\!\times\!N_0$.
\end{lemma}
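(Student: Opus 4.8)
\emph{Proof plan.} The plan is to strip off the outermost Rayleigh factor one step at a time, using a thin QR factorization together with the rotation invariance of the i.i.d.\ complex Gaussian ensemble. Set $\vec{G}_j=\vec{Q}_j\vec{Q}_{j-1}\cdots\vec{Q}_1$ for $1\le j\le K$, so that $\vec{G}_1=\vec{Q}_1$ has size $N_1\times N_0$, $\vec{G}_K=\vec{H}$, and each $\vec{G}_j$ has size $N_j\times N_0$. Because $N_0=\min_{0\le k\le K}N_k$, every intermediate dimension is at least $N_0$, so a Gaussian product of this shape has full column rank $N_0$ almost surely; this keeps $\det(\vec{G}_j\rmh\vec{G}_j)>0$ a.s.\ for every $j$, which is what makes the logarithms below well defined.

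For the recursion, fix $j\ge2$ and condition on $\vec{G}_{j-1}$. Almost surely $\vec{G}_{j-1}$ has full column rank, hence admits an (a.s.\ unique, measurable) thin QR factorization $\vec{G}_{j-1}=\vec{W}_{j-1}\vec{R}_{j-1}$, with $\vec{W}_{j-1}$ of size $N_{j-1}\times N_0$ obeying $\vec{W}_{j-1}\rmh\vec{W}_{j-1}=\vec{I}$ and $\vec{R}_{j-1}$ upper triangular of size $N_0\times N_0$. Writing $\tilde{\vec{Q}}_j:=\vec{Q}_j\vec{W}_{j-1}$ gives $\vec{G}_j=\tilde{\vec{Q}}_j\vec{R}_{j-1}$, and since $\vec{G}_j\rmh\vec{G}_j=\vec{R}_{j-1}\rmh\big(\tilde{\vec{Q}}_j\rmh\tilde{\vec{Q}}_j\big)\vec{R}_{j-1}$ is a product of $N_0\times N_0$ matrices,
\[ \det\!\big(\vec{G}_j\rmh\vec{G}_j\big)=\big|\det\vec{R}_{j-1}\big|^{2}\,\det\!\big(\tilde{\vec{Q}}_j\rmh\tilde{\vec{Q}}_j\big)=\det\!\big(\vec{G}_{j-1}\rmh\vec{G}_{j-1}\big)\,\det\!\big(\tilde{\vec{Q}}_j\rmh\tilde{\vec{Q}}_j\big), \]
the last equality because $\vec{R}_{j-1}\rmh\vec{R}_{j-1}=\vec{R}_{j-1}\rmh\vec{W}_{j-1}\rmh\vec{W}_{j-1}\vec{R}_{j-1}=\vec{G}_{j-1}\rmh\vec{G}_{j-1}$. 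The key distributional fact is that $\tilde{\vec{Q}}_j$ is again an $N_j\times N_0$ matrix with i.i.d.\ $\mathcal{CN}(0,1)$ entries: $\vec{Q}_j$ is independent of $\vec{W}_{j-1}$ (a function of $\vec{Q}_1,\dots,\vec{Q}_{j-1}$), and for any deterministic $\vec{w}$ with orthonormal columns the rows of $\vec{Q}_j\vec{w}$ are independent $\mathcal{CN}(\vec{0},\vec{I})$ vectors, so the claim holds conditionally on $\vec{W}_{j-1}$ and hence marginally; thus $\tilde{\vec{Q}}_j$ has the same law as $\hat{\vec{Q}}_j$.

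Taking logarithms in the displayed identity (valid a.s., since both determinants are a.s.\ positive) and then expectations, linearity of expectation yields $\mathbb{E}[\ln\det(\vec{G}_j\rmh\vec{G}_j)]=\mathbb{E}[\ln\det(\vec{G}_{j-1}\rmh\vec{G}_{j-1})]+\mathbb{E}[\ln\det(\hat{\vec{Q}}_j\rmh\hat{\vec{Q}}_j)]$; note that no independence of $\tilde{\vec{Q}}_j$ from $\vec{G}_{j-1}$ is required here, only that $\tilde{\vec{Q}}_j$ has the right marginal law. Iterating this from $j=K$ down to $j=2$ and appending the base case $\mathbb{E}[\ln\det(\vec{G}_1\rmh\vec{G}_1)]=\mathbb{E}[\ln\det(\hat{\vec{Q}}_1\rmh\hat{\vec{Q}}_1)]$ then yields $(\ref{lemeq})$. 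All the terms involved are finite, since each $\mathbb{E}[\ln\det(\hat{\vec{Q}}_k\rmh\hat{\vec{Q}}_k)]$ is a finite sum of digamma values, so integrability is not a concern.

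The one point that needs genuine care — and the place where the hypothesis $N_0=\min_k N_k$ actually enters — is the claim that the Gaussian products $\vec{G}_j$ have full column rank $N_0$ almost surely and that the thin QR factor $\vec{W}_{j-1}$ can be realized as a measurable function of $\vec{G}_{j-1}$, so that $\tilde{\vec{Q}}_j=\vec{Q}_j\vec{W}_{j-1}$ is a bona fide random matrix and the conditioning argument is legitimate. Everything else is routine: determinants of products of square matrices, and the unitary invariance of the Gaussian ensemble. I therefore expect no substantive obstacle beyond those measurability/rank technicalities.
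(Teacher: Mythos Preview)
Your proof is correct and follows essentially the same route as the paper: isolate an orthonormal-column factor from the partial product, absorb it into the next independent Gaussian matrix via rotation invariance, split off one $\ln\det$ term, and iterate. The only cosmetic difference is that you use a thin QR factorization of the accumulated product $\vec{G}_{j-1}$, whereas the paper takes the SVD of the innermost single factor $\vec{Q}_1$ and then recurses on the shorter Rayleigh product with parameters $(N_0,N_2,\dots,N_K)$; both arguments hinge on the same ingredient, namely the unitary invariance of the i.i.d.\ complex Gaussian ensemble.
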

\begin{proof}
See Appendix B.
\end{proof}

From Lemma 1, $\tilde{R}$ can be expressed as a summation over i.i.d. Rayleigh MIMO channels $\hat{\vec{Q}}_k$ at high SNR, but with reduced dimensions $N_k\!\times\!N_0$, instead of the original $N_k\!\times\!N_{k-1}$ of $\vec{Q}_k$. Since $\hat{\vec{Q}}_k\rmh\hat{\vec{Q}}_k$ is complex Wishart distributed, it can be readily seen from \cite{OP02, G63} that
 \bea \label{EHH} \mathbb{E}\!\left[\ln\det(\hat{\vec{Q}}_k\rmh\hat{\vec{Q}}_k)\right] &=&\sum_{\ell=1}^{N_0}\psi(N_k-\ell+1) \notag \\
 &=&-N_0\gamma+\sum_{\ell=1}^{N_0}\sum_{r=1}^{N_k-\ell}\frac{1}{r}, \eea
where the \textit{digamma} function $\psi(n)$ is $$\psi(n)=-\gamma+\sum_{k=1}^{n-1}\frac{1}{k}$$
and $\gamma\!\approx\!0.5772$ is the \textit{Euler-Mascheroni} constant.

Inserting (\ref{EHH}) back into (\ref{lemeq}), we have the below Theorem 1, which will be useful in deriving a lower-bound for $\tilde{R}$.

\begin{theorem}
For a Rayleigh product channel $\vec{H}$, it holds that
 \bea \label{c2} \mathbb{E}\!\left[\ln\det(\vec{H}\vec{H}\rmh)\right] &=&\sum_{k=1}^{K}\sum_{\ell=1}^{N_{0}}\psi(N_k-\ell+1) \notag \\
 &=&-KN_0\gamma+\sum_{k=1}^{K}\sum_{\ell=1}^{N_0}\sum_{r=1}^{N_k-\ell}\frac{1}{r}. \eea
\end{theorem}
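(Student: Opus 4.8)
The proof of Theorem 1 is essentially an assembly step: it combines Lemma 1 with the closed-form expectation in (\ref{EHH}), so I would treat it as a short derivation rather than a substantial new argument.

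First I would invoke Lemma 1, which asserts that
\bea
\mathbb{E}\!\left[\ln\det\!\left(\vec{H}\rmh\vec{H}\right)\right]=\sum_{k=1}^{K}\mathbb{E}\!\left[\ln\det\!\left(\hat{\vec{Q}}_k\rmh\hat{\vec{Q}}_k\right)\right],
\eea
where each $\hat{\vec{Q}}_k$ is an i.i.d.\ Rayleigh MIMO channel of dimension $N_k\!\times\!N_0$. Since $\det(\vec{H}\vec{H}\rmh)$ and $\det(\vec{H}\rmh\vec{H})$ share the same nonzero eigenvalues (and with $N_0=\min_k N_k$ both matrices have the same rank $N_0$ almost surely, with the nonzero spectrum identical), the left-hand side of (\ref{c2}) equals the left-hand side of the Lemma. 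Then I would substitute the per-tier evaluation (\ref{EHH}), namely $\mathbb{E}[\ln\det(\hat{\vec{Q}}_k\rmh\hat{\vec{Q}}_k)] = \sum_{\ell=1}^{N_0}\psi(N_k-\ell+1)$, which follows from the complex Wishart eigenvalue statistics of Goodman/Oyman--Paulraj. Summing over $k$ yields the first equality in (\ref{c2}).

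For the second equality I would simply expand the digamma function using its defining identity $\psi(n)=-\gamma+\sum_{r=1}^{n-1}\tfrac{1}{r}$, so that each term $\psi(N_k-\ell+1)$ contributes $-\gamma+\sum_{r=1}^{N_k-\ell}\tfrac1r$. Collecting the $-\gamma$ terms over the double sum $\sum_{k=1}^{K}\sum_{\ell=1}^{N_0}$ gives $-KN_0\gamma$, and the remaining harmonic sums give the stated triple sum. This is purely mechanical.

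The only point requiring any care — and hence the "main obstacle," though it is minor — is justifying the reduction from the $N_k\!\times\!N_{k-1}$ matrices $\vec{Q}_k$ to the $N_k\!\times\!N_0$ matrices $\hat{\vec{Q}}_k$, i.e.\ ensuring Lemma 1 applies with the correct dimensions and that the resulting Wishart matrices are non-degenerate so that (\ref{EHH}) is valid (one needs $N_k\ge N_0$, which holds after the permutation guaranteed by Property 1). Since Lemma 1 is already established in Appendix B and (\ref{EHH}) is a standard Wishart result, the theorem follows immediately by substitution.
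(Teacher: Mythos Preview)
Your proposal is correct and matches the paper's approach exactly: the paper states Theorem 1 immediately after (\ref{EHH}) with the one-line justification ``Inserting (\ref{EHH}) back into (\ref{lemeq}), we have the below Theorem 1,'' which is precisely the substitution-and-expansion argument you describe. Your additional remarks about $\det(\vec{H}\vec{H}\rmh)=\det(\vec{H}\rmh\vec{H})$ on the nonzero spectrum and the use of Property~1 to ensure $N_k\geq N_0$ are the right technical caveats, though the paper leaves them implicit.
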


From Theorem 1, we see that $N_0$ (the minimum of $N_k$) and $K$ (the total number of tiers) play fundamental roles in the ergodic capacity that can be achieved for the Rayleigh product channel $\vec{H}$.

\section{Bounds of the Ergodic Capacity for Rayleigh Product Channel}

\subsection{Exact-Form of the Ergodic Capacity}
Given the Rayleigh product channel model (\ref{md1}), the ergodic capacity $\tilde{R}$ in (\ref{tR}) can be solved in an analytical-form stated in Lemma 2 \cite{AK13}.
\begin{lemma}
With Rayleigh product channel model (\ref{md1}), the ergodic capacity $\tilde{R}$ (nats per channel use) in (\ref{tR}) equals
\bea \tilde{R}&=& \!\sum_{n=0}^{N_0-1}\sum_{m=0}^{n}\!\Bigg(\sum_{s=0}^{n}\frac{(-1)^{m+s}n!(n+\nu_1)!}{(n-m)!m!(n-s)!s!(s+\nu_1)!} \notag \\
&& \quad\; \times\, \MeijerG*{K+2,1}{2,K+2}{0,\,1 \\ m+1+\nu_K,\,\cdots,\,m+1+\nu_2,\,m+s+1+\nu_1,\,0,\,0 } {q^{-1}}\!\!\Bigg) \notag \\ 
&& \quad\; \times\Bigg(\prod_{k=1}^{K}\frac{1}{(m+\nu_k)!}\Bigg),
\eea
where the positive dimension differences
$$ \nu_k=N_k-N_0.$$
\end{lemma}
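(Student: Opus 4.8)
The plan is to compute the ergodic capacity $\tilde{R}=\mathbb{E}[\ln\det(\vec{I}+q\vec{H}\rmh\vec{H})]$ by integrating against the known joint eigenvalue density of $\vec{H}\rmh\vec{H}$ for a product of $K$ i.i.d.\ Rayleigh matrices. First I would invoke the weak-commutation property (Property 1) to order the dimensions so that $N_0=\min_k N_k$, so that $\vec{H}\rmh\vec{H}$ is an $N_0\times N_0$ positive-definite matrix with exactly $N_0$ nonzero eigenvalues $\lambda_1,\dots,\lambda_{N_0}$. The marginal/joint density of these eigenvalues for a product of rectangular Gaussian matrices is known in the literature (e.g.\ \cite{AK13, RK14}) to take a biorthogonal (Andr\'eief/Borodin) form: the joint pdf is proportional to a Vandermonde determinant times a determinant whose entries are built from Meijer $G$-functions in $\lambda_\ell$, with the parameters determined by the differences $\nu_k=N_k-N_0$. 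Because $\ln\det(\vec I+q\vec H\rmh\vec H)=\sum_{\ell=1}^{N_0}\ln(1+q\lambda_\ell)$ is a linear statistic, its expectation reduces, via the standard Andr\'eief identity applied to the biorthogonal ensemble, to a single-variable integral of $\ln(1+q\lambda)$ against the (unnormalized) one-point correlation kernel, which is itself a finite sum of products of two Meijer $G$-type functions.

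The core of the argument is then the evaluation of that single scalar integral
\[
\int_0^\infty \ln(1+q\lambda)\,\lambda^{a}\,G^{K,0}_{0,K}\!\left(\lambda \,\Big|\, b_1,\dots,b_K\right)\dif\lambda .
\]
I would write $\ln(1+q\lambda)$ itself as a Meijer $G$-function, $\ln(1+q\lambda)=G^{1,2}_{2,2}\!\left(q\lambda \,\middle|\, {1,1\atop 1,0}\right)$, and the power $\lambda^a$ can be absorbed into the parameters of the $G^{K,0}_{0,K}$ factor by the standard shift rule $x^c G^{m,n}_{p,q}(x\mid \cdots)=G^{m,n}_{p,q}(x\mid \cdots\text{ shifted by }c)$. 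The resulting integral is then a Mellin convolution of two Meijer $G$-functions, which is closed by the Meijer $G$ integration formula (the ``second Barnes lemma'' type identity, \cite{ET55}), producing a single $G^{K+2,1}_{2,K+2}(q^{-1}\mid\cdots)$ whose lower parameter list is exactly the $m+1+\nu_k$ terms together with the two extra zeros coming from the $\ln$-function's $G$-representation. Collecting the combinatorial prefactors from the normalization constant of the ensemble and from expanding the determinant into its $n,m,s$ index sums yields the triple sum and the product $\prod_k 1/(m+\nu_k)!$ stated in the lemma.

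The main obstacle I expect is bookkeeping rather than any single hard inequality: one must (i) get the joint eigenvalue density in precisely the biorthogonal form with the correct Meijer-$G$ entries and normalization constant for \emph{rectangular} components (the $\nu_k$ shifts must land in exactly the right slots), and (ii) carefully track how the Andr\'eief reduction turns the determinantal structure into the $\sum_{n}\sum_{m\le n}\sum_{s\le n}$ structure with the alternating binomial-type coefficients $(-1)^{m+s}n!(n+\nu_1)!/[(n-m)!\,m!\,(n-s)!\,s!\,(s+\nu_1)!]$. A secondary technical point is justifying convergence/interchange of the $\lambda$-integral with the finite sums and checking the parameter conditions under which the Mellin--Barnes integral for the product of $G$-functions converges, so that the closed-form $G^{K+2,1}_{2,K+2}$ is valid for all $q>0$; since all sums are finite and the integrand decays, this is routine but should be stated. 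I would, however, not reprove the eigenvalue density from scratch — I would cite \cite{AK13} for it and devote the write-up to the integral evaluation and the prefactor accounting.
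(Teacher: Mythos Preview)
Your proposal is methodologically sound and is indeed the standard route to such identities: pass to the eigenvalue density of the product ensemble (biorthogonal form with Meijer-$G$ weights, as in \cite{AK13}), reduce the linear statistic $\sum_\ell \ln(1+q\lambda_\ell)$ to a one-dimensional integral against the one-point function, represent $\ln(1+q\lambda)$ as $G^{1,2}_{2,2}$, and close the integral with the Mellin--Barnes convolution rule for two $G$-functions. The bookkeeping concerns you flag (placement of the $\nu_k$ shifts, emergence of the $(n,m,s)$ triple sum with the alternating binomial coefficients, convergence of the Mellin--Barnes integral) are exactly the right things to watch, and none of them is an actual obstruction.

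However, you should be aware that the paper does \emph{not} prove Lemma~2 at all. It is stated as a known result and attributed directly to \cite{AK13}; there is no corresponding appendix, and the surrounding text explicitly says the analytical form ``can be solved'' as in that reference. So there is nothing in the paper for your proof to be compared against: your write-up would be strictly additive, supplying a derivation the paper deliberately omitted. If your goal is to match the paper, a one-line citation to \cite{AK13} suffices; if your goal is a self-contained argument, your outline is correct, and citing \cite{AK13} for the joint density while carrying out the $G$-function integral and prefactor accounting yourself is the appropriate division of labor.
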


Although $\tilde{R}$ can be expressed in analytical-form, it is complex to evaluate with involving the \textit{Meijer} $G$-function and the \textit{gmama} function $\Gamma(\cdot)$, and the \textit{Meijer} $G$-function is defined as a line integral on the complex-plane as \cite{ET55}
\bea \label{exacttR} &&\MeijerG*{m, n}{p, q}{a_1,\,a_2,\,\cdots,\,a_p \\ b_1,\,b_2,\,\cdots,\,b_q } {z}\notag \\
&& =\frac{1}{2\pi i}\bigintss_{L}\frac{\prod\limits_{j=1}^m\Gamma(b_j-s)\prod\limits_{j=1}^n\Gamma(1-a_j+s)}{\prod\limits_{j=m+1}^q\Gamma(1-b_j+s)\prod\limits_{j=n+1}^p\Gamma(a_j-s)}z^s \mathrm{d}z. \notag \eea
Further, it is also difficult to understand the connections between different parameter settings $(N_0, N_1, \cdots, N_K)$ and the attained ergodic capacity $\tilde{R}$. Hence, next we find bounds for $\tilde{R}$ with simpler forms.

\subsection{Upper and Lower Bounds}

By Jessen\rq{}s inequality, the ergodic capacity $\tilde{R}$ can be upper bounded as
{\setlength\arraycolsep{2pt} \bea \label{ub}\tilde{R}&\leq&\ln\!\det\!\left(\vec{I}+q\mathbb{E}\big[\vec{H}\rmh\vec{H}\big]\right)\notag \\
&=&N_0\ln\!\left(\!1+q\prod\limits_{k=1}^{K}N_k\!\right)\!.\eea}
\hspace{-1.4mm}This bound is trivial and widely used to approximate the ergodic capacity for normal Rayleigh channels, such as in massive MIMO systems \cite{T99, HT04}. For Rayleigh product channel ($K\!\geq\!2$), however, this upper-bound becomes loose as what will be explained later. Therefore, seeking another bound that is tight is of interest. Following the similar idea in \cite{H18UAV}, we derive a lower-bound that is asymptotically tight, which is stated in Property 2 together with the aforementioned upper-bound (\ref{ub}).

\begin{property}
The ergodic capacity of the Rayleigh product channel model (\ref{md1}) is bounded as
 \be \label{tR3} N_0\ln\!\big(1+q\exp\Big(g-K\gamma\big)\Big)\leq \tilde{R}\leq N_0\ln\!\left(\!1+q\prod\limits_{k=1}^{K}N_k\!\right)\!, \ee
where  
\bea \label{gK} g&=&K\gamma+\frac{1}{N_0}\sum_{k=1}^{K}\sum_{\ell=1}^{N_0}\psi(N_k-\ell+1)\notag \\
&=& \frac{1}{N_0}\sum_{k=1}^{K}\sum_{\ell=1}^{N_0}\sum_{r=1}^{N_k-\ell}\frac{1}{r}. \eea
\end{property}
\begin{proof}
See Appendix C.
\end{proof}

\subsection{Asymptotic Properties of the Bounds}
Under cases that $q\exp\Big(g-K\gamma\big)\!\gg\!1$, it holds that
 \be   N_0\ln\!\big(1+q\exp\Big(g-K\gamma\big)\Big)\approx N_0\ln q+N_0\Big(g-K\gamma\big). \ee
Then, from (\ref{tRh}) and Theorem 1 we have the below corollary.
\begin{corollary}
The lower-bound in (\ref{tR3}) for the ergodic capacity $\tilde{R}$ is asymptotically tight.
\end{corollary}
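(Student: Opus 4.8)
The plan is to show that the lower-bound in (\ref{tR3}) and the exact ergodic capacity $\tilde{R}$ agree asymptotically in the high-SNR regime, i.e., their difference tends to $0$ (or equivalently, their ratio tends to $1$) as $q\to\infty$. The starting point is the high-SNR expansion (\ref{tRh}), which together with Theorem 1 gives
\bea
\tilde{R}\approx N_0\ln q+\sum_{k=1}^{K}\sum_{\ell=1}^{N_0}\psi(N_k-\ell+1)=N_0\ln q+N_0\bigl(g-K\gamma\bigr),\notag
\eea
using the definition of $g$ in (\ref{gK}). On the other side, under $q\exp(g-K\gamma)\gg1$ the displayed approximation just above the corollary says the lower-bound behaves as $N_0\ln q+N_0(g-K\gamma)$. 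First I would simply observe that these two asymptotic expressions coincide, which establishes that the gap between $\tilde{R}$ and its lower-bound vanishes relative to the leading $N_0\ln q$ term, hence the lower-bound is asymptotically tight.

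Second, to make this rigorous rather than heuristic, I would want to control the error term in the approximation (\ref{tRh}). The cleanest route is to write $\tilde{R}=\mathbb{E}[\ln\det(\vec{I}+q\vec{H}\rmh\vec{H})]$ and note that $\ln\det(\vec{I}+q\vec{H}\rmh\vec{H})\geq\ln\det(q\vec{H}\rmh\vec{H})=N_0\ln q+\ln\det(\vec{H}\rmh\vec{H})$ whenever $\vec{H}\rmh\vec{H}$ is nonsingular (which holds almost surely since $N_0=\min_k N_k$), and in the other direction $\ln\det(\vec{I}+q\vec{H}\rmh\vec{H})\leq N_0\ln q+\ln\det(\vec{H}\rmh\vec{H})+\ln\det(\vec{I}+q^{-1}(\vec{H}\rmh\vec{H})^{-1})$. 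Taking expectations and invoking Theorem 1 sandwiches $\tilde{R}$ between $N_0\ln q+N_0(g-K\gamma)$ and that quantity plus an extra term $\mathbb{E}[\ln\det(\vec{I}+q^{-1}(\vec{H}\rmh\vec{H})^{-1})]$, which is nonnegative and monotonically decreasing to $0$ as $q\to\infty$ by dominated convergence (the integrand is dominated by $\ln\det(\vec{I}+(\vec{H}\rmh\vec{H})^{-1})$, whose integrability follows from the Wishart moment structure used in (\ref{EHH})). The same sandwich applied to the lower-bound $N_0\ln(1+q\exp(g-K\gamma))$, which differs from $N_0\ln q+N_0(g-K\gamma)$ by $N_0\ln(1+q^{-1}\exp(-(g-K\gamma)))\to0$, then gives $\bigl|\tilde{R}-N_0\ln(1+q\exp(g-K\gamma))\bigr|\to0$.

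The main obstacle I anticipate is the integrability/dominated-convergence step for the inverse-Wishart-type term $\mathbb{E}[\ln\det(\vec{I}+q^{-1}(\vec{H}\rmh\vec{H})^{-1})]$: near $q^{-1}=0$ it is harmless, but one must be sure the expectation is finite and that interchange of limit and expectation is justified, which requires a tail bound on the smallest eigenvalue of $\vec{H}\rmh\vec{H}$ for a product of Gaussian matrices. If a fully rigorous argument is not desired, the simpler and likely intended justification is just the direct matching of the two leading-order asymptotic expansions under the stated condition $q\exp(g-K\gamma)\gg1$, treating (\ref{tRh}) as the already-accepted high-SNR approximation; in that case the corollary is essentially immediate from (\ref{tRh}), Theorem 1, and the definition of $g$, and I would present it in one or two lines.
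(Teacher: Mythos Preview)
Your proposal is correct and matches the paper's approach: the paper simply notes that under $q\exp(g-K\gamma)\gg1$ the lower bound reduces to $N_0\ln q+N_0(g-K\gamma)$, which by (\ref{tRh}) and Theorem~1 coincides with the high-SNR expression for $\tilde{R}$. Your additional sandwich argument with dominated convergence goes beyond what the paper actually presents, but your final remark correctly identifies the one-line matching of asymptotic expansions as the intended proof.
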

 
To show the gap between the derived upper and lower bounds, we notice that the difference between them is asymptotically equal to $\Delta\epsilon$ and
\bea \frac{\Delta\epsilon}{N_0}=-\big(g-K\gamma\big)+ \sum\limits_{k=1}^{K}\ln N_k.  \eea
Using the approximation of \textit{digamma} function \cite{ET55} that
$$\psi(x)\approx\ln x-\frac{1}{2x}, \;\; x\!>\!1,$$
and by the definition of $g$, $\Delta\epsilon$ can be approximated as
\bea   \frac{\Delta\epsilon}{N_0}&=&-\frac{1}{N_0}\sum_{k=1}^{K}\sum_{\ell=1}^{N_0}\psi(N_k-\ell+1)+ \sum\limits_{k=1}^{K}\ln N_k\notag \\
&\approx& \frac{1}{N_0}\sum_{k=1}^{K}\sum_{\ell=1}^{N_0}\!\left(\!\ln\!\left(\!\frac{N_k}{N_k-\ell+1}\!\right)\!+\frac{1}{2(N_k-\ell+1)}\!\right)\!\!.\qquad \eea
Therefore, the ergodic capacity difference $\Delta\epsilon$ satisfies
\bea  \label{deltae}  \Delta\epsilon&>&\sum_{k=1}^{K}\sum_{\ell=1}^{N_0}\frac{1}{2(N_k-\ell+1)}   \\
&\geq&N_0\sum_{k=1}^{K}\frac{1}{2N_k}. \notag \eea

As the lower-bound is asymptotically tight, the ergodic capacity difference in (\ref{deltae}) is asymptotically equal to the errors between the upper-bound and the exact value of $\tilde{R}$. As can be seen from (\ref{deltae}), the error $\Delta\epsilon$ increases as $K$ increases, and in order for $\Delta\epsilon$ to be close to zero, it must holds that $N_k\!\gg\!N_0$ for $k\!>\!0$.

Using Corollary 1 we can obtain a below corollary for a normal Rayleigh MIMO channel (i.e., $K\!=\!1$). 
\begin{corollary}
The ergodic capacity $\tilde{R}$ when $q\!\to\!\infty$ for a normal Rayleigh channel $\vec{H}$ of sizes $N_0\!\times\!N_1$ ($N_1\!\geq\!N_0$) can be approximated as
\bea \label{ntR}\tilde{R}&=&\mathbb{E}\big[\ln\!\det\!\left(\vec{I}+q\vec{H}\rmh\vec{H}\right)\!\big]\notag \\
&\approx&  N_0\big(\ln\! q-\gamma\big)+\sum_{\ell=1}^{N_0}\sum_{r=1}^{N_1-\ell}\frac{1}{r}.  \eea
\end{corollary}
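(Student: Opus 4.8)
The plan is to specialize Corollary 1 and Theorem 1 to the single-tier case $K=1$, where the Rayleigh product channel $\vec{H}=\vec{Q}_1$ reduces to an ordinary $N_1\times N_0$ i.i.d.\ Rayleigh MIMO channel. First I would invoke the high-SNR approximation \eqref{tRh}, which in the $K=1$ case reads $\tilde{R}\approx N_0\ln q+\mathbb{E}[\ln\det(\vec{H}\rmh\vec{H})]$. The remaining task is then to evaluate the expectation term, and this is exactly what Theorem 1 (equivalently, the Wishart identity \eqref{EHH}) provides with $K=1$: the double sum collapses to $\mathbb{E}[\ln\det(\vec{H}\rmh\vec{H})]=\sum_{\ell=1}^{N_0}\psi(N_1-\ell+1)=-N_0\gamma+\sum_{\ell=1}^{N_0}\sum_{r=1}^{N_1-\ell}\tfrac{1}{r}$.

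Substituting this into the high-SNR expression immediately yields $\tilde{R}\approx N_0\ln q-N_0\gamma+\sum_{\ell=1}^{N_0}\sum_{r=1}^{N_1-\ell}\tfrac{1}{r}=N_0(\ln q-\gamma)+\sum_{\ell=1}^{N_0}\sum_{r=1}^{N_1-\ell}\tfrac{1}{r}$, which is precisely \eqref{ntR}. To make the argument self-contained I would also note why the asymptotic-tightness claim of Corollary 1 applies here: the lower-bound in \eqref{tR3} with $K=1$ is $N_0\ln(1+q\exp(g-\gamma))$ with $g=\tfrac{1}{N_0}\sum_{\ell=1}^{N_0}\psi(N_1-\ell+1)+\gamma$, and for $q\to\infty$ this behaves like $N_0\ln q+N_0(g-\gamma)=N_0\ln q+\sum_{\ell=1}^{N_0}\psi(N_1-\ell+1)$, matching the exact high-SNR value; hence the "$\approx$" in \eqref{ntR} is an asymptotic equality as $q\to\infty$.

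There is essentially no hard step here — the corollary is a direct specialization, and the only things to be careful about are bookkeeping: confirming that the condition $N_1\geq N_0$ guarantees $\vec{H}\rmh\vec{H}$ is full-rank (so that $\ln\det$ is finite and the Wishart moment formula \eqref{EHH} is valid with $N_0$ the smaller dimension), and checking the index ranges in the nested sum so that the empty inner sum (when $\ell=N_1$, if $N_0=N_1$) contributes zero. If one wanted a cleaner statement I might also remark that the approximation becomes exact in the limit, i.e.\ $\tilde{R}-N_0\ln q\to -N_0\gamma+\sum_{\ell=1}^{N_0}\sum_{r=1}^{N_1-\ell}\tfrac{1}{r}$ as $q\to\infty$, but for the purposes of the corollary the displayed high-SNR form suffices. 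The main "obstacle," such as it is, is simply making explicit that Theorem 1 was stated for general $K$ and reduces correctly at $K=1$; no new estimate or inequality is needed.
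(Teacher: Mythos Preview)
Your proposal is correct and matches the paper's own derivation: the paper obtains this corollary by specializing Corollary~1 (asymptotic tightness of the lower-bound in Property~2) to $K=1$, which amounts to exactly the high-SNR expansion \eqref{tRh} combined with the $K=1$ case of Theorem~1/\eqref{EHH} that you spell out. Your additional remarks on the full-rank condition and the empty inner sum are fine bookkeeping but not needed beyond what the paper does.
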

Note that when $N_1\!\gg\!N_0$, the harmonic series
$$\sum_{r=1}^{N_1}\frac{1}{r}\approx\ln\! N_1+\gamma,$$
and (\ref{ntR}) becomes
\bea \label{hsnrbd} \tilde{R}\approx N_0\ln\! \big(qN_1\big), \eea
which is aligned with the upper-bound in Property 2 for the case $K\!=\!1$. However, (\ref{hsnrbd}) only holds for cases $N_1\!\gg\!N_0$, but the derived (\ref{ntR}) holds for general settings of $N_0$ and $N_1$.

\subsection{The Connection between Rectangular and Square Random Matrices in Rayleigh Product Channel}

Based on Property 2, we have Property 3 that states the ergodic capacity difference between the Rayleigh product channel formed by a number of rectangular and square i.i.d. random matrices.

\begin{property}
At high SNR the ergodic capacity increment $\Delta\tilde{R}$, between Rayleigh product channels (for an identical $q$) with a parameter setting $(N_0,\, N_1,\, \cdots,\, N_K)$ and with square matrices $N_k\!=\!N_0$ ($1\!\leq\!k\!\leq\!K$), is
\bea \Delta\tilde{R}&=&\sum_{k=1}^{K}\sum_{\ell=1}^{N_{0}}\psi(N_k-\ell+1)-\sum_{k=1}^{K}\sum_{\ell=1}^{N_{0}}\psi(N_0-\ell+1) \notag \\ &=& \sum_{k=1}^{K}\sum_{\ell=1}^{N_{0}}\sum_{s=N_0}^{N_k-1}\frac{1}{s-\ell+1}. \notag \eea
\end{property}
\begin{proof}
See Appendix D.
\end{proof}

To interpret Property 3, as a special case, we consider adding an extra antenna\footnote{Such an operation changes both the dimensions of $\vec{Q}_{k-1}$ and $\vec{Q}_k$. However, if $N_k\!<\!N_{k+1}$, it still holds $\tilde{N}_k\!\leq\!N_{k+1}$. If $N_k\!=\!N_{k+1}$, then adding an extra-antenna to $N_k$ is equivalent to add that antenna to $N_{k+1}$ which yields the same capacity increment.} in the UAV-assisted communication system by increasing $\tilde{N}_k\!=\!N_k\!+\!1$ can bring an increment $\Delta\tilde{R}$ that is asymptotically equal to (\ref{eq3}), that is,
\bea   \Delta\tilde{R}=\sum_{r=1}^{N_0}\frac{1}{N_k-\ell+1}. \notag \eea 

\begin{figure*}[t]
\vspace*{-3mm}
\begin{center}
\hspace*{-0mm}
\scalebox{0.43}{\includegraphics{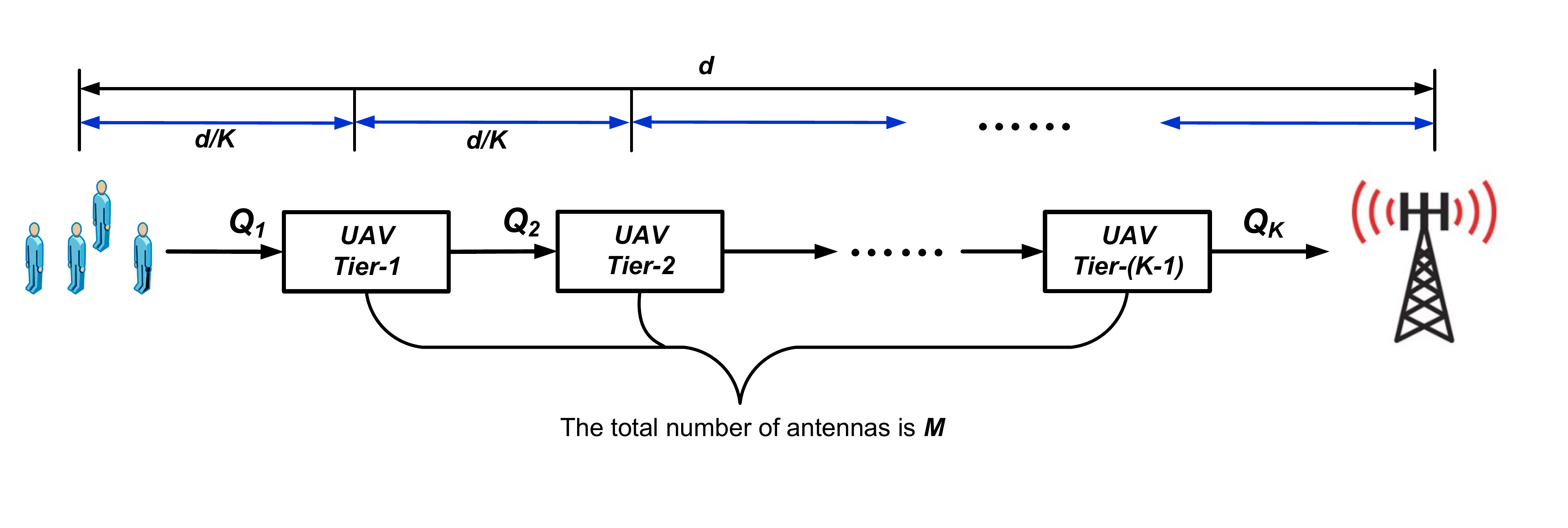}}
\vspace*{-12mm}
\caption{\label{fig2}Partitioning the total number of $M$ UAVs into $K\!-\!1$ tiers with parameters $(N_1,\,N_2,\,\cdots,\,N_{K-1})$, and the values of $N_0$ and $N_K$ are fixed.}
\vspace*{-6mm}
\end{center}
\end{figure*}

\section{Number of Ties Optimization}
With the derived bounds, in this section we consider optimizing the number of tiers in a UAV-assisted communication system with a given $M$ UAVs. Such a UAV relay system is analogous to a linear multi-tier terrestrial relay system \cite{NL11, S04, OS06}, but the target now is to find an optimal partition of $M$ that splits it into $K\!-\!1$ integers that satisfy (\ref{partM}) according to different scenarios, which yield a deployment of UAV-assisted relay system that can have the highest ergodic capacity for each applied scenario.

There are obvious trade-offs between the number of tiers $K$ and the number of UAVs at each tier $N_k$. When $K$ is larger, the minimum value of $N_k$ becomes smaller. That is, the spatial multiplexing gain is reduced seen from (\ref{tR3}). On the other hand, when $K$ is smaller, the distance between two adjacent tiers are larger. The power attenuation factor can be modeled as \cite{S04} 
\bea  \eta\propto\Big(\frac{d}{K}\Big)^{\!-\alpha}, \eea
where $d$ is the distance between the users to the BTS, and $\alpha$ is the path-loss exponent with typical values between 2 and 4.

Further, for fair comparisons we assume that the transmit-power at each antenna of the UAV is equal to $p$. That is, at the $k$th UAV-tier, the received signal at each antenna is scaled by a factor $p/N_{k-1}$, and the total transmit power of all UAVs is equal to $Mp$. Note that, the cases different UAVs with unequal transmit power follow the similar analysis since this will only impact the definition of parameter $q$ in the ergodic capacity of the modeled Rayleigh product channel.

With the above assumptions, the ergodic capacity for the UAV-assisted systems can be modeled as
\bea \label{tRK}\tilde{R}&=&\mathbb{E}\big[\ln\det\!\left(\vec{I}+q\vec{H}\rmh\vec{H}\right)\!\big],  \eea
where
\bea \label{qcon} q=cp_0\frac{K^\alpha p^{K-1}}{\prod\limits_{k=0}^{K-2}N_k},\eea
and $c$ is a constant representing the power attenuation from the users to the BTS with respect to the distance $d$, and $p_0$ is the transmit power from each of the users. Without loss of generality, we let
$$\tilde{p}=(cp_0)^{\frac{1}{K-1}}p,$$
and (\ref{qcon}) can be rewritten as
\bea \label{qcon1} q=\frac{K^\alpha \tilde{p}^{K-1}}{\prod\limits_{k=0}^{K-2}N_k}.\eea
With the modeled Rayleigh product channel $\vec{H}$ in (\ref{tRK}), the analysis on ergodic capacity in Sec. III can be used for optimizing the deployment of UAVs.

\subsection{Problem Formulation}

The optimization problem can be formulated as:
{\setlength\arraycolsep{2pt} \bea \label{prbm} &&\underset{\begin{subarray}{c}
K\\
 (N_1,N_2,\cdots,N_{K-1})  \end{subarray}}{\text{maximize}} \;\;  \tilde{R} \text{\,\;in\;} (\ref{tRK})\notag   \\
&&\quad\mathrm{subject\; to\;\;\;\;\;}  (\ref{partM}) \text{\;and\;} (\ref{qcon1}).\eea}
\hspace{-1.5mm}where both $K$ and $(N_1,N_2,\cdots,N_{K-1})$ are yet to be optimized, and $N_0$ and $N_K$ are known parameters in computing $\tilde{R}$ that denotes the number of antennas of the users and the BTS, respectively.

An exhaust search over all possible partition sets yields a prohibitive complexity when $M$ is large\footnote{Under certain circumstances, the optimization can be simplified. For instance, if the values $q$ in (\ref{tRK}) are unaltered for different settings such as the UAVs are only used as scatters, then to optimize the ergodic capacity with the upper-bound is equivalent to maximum the product of the elements in the partition set of $M$. The optimal partition follows the rule that $M$ are partitioned only with 2 and 3, and with as many 3\rq{}s as possible \cite{D05, B93}.}. An asymptotic expression of the number\footnote{We use $\#M$ to denote the number of integer partitions of $M$.} of integer partitions for $M$ is \cite{HR18}
\bea  \label{nbset} \#M\approx\frac{1}{4\sqrt{3}M}\exp\left(\! \pi\sqrt{\frac{2M}{3}}\right)\!,   \eea
which increases rapidly as $M$ increases. Due to the numerical calculations needed for evaluating hyper-geometric functions, directly solving (\ref{prbm}) with the exact-form of $\tilde{R}$ in Lemma 2 is also complex. Therefore, we consider to use the derived bounds.

If the upper-bound of $\tilde{R}$ in Property 2 is used in the optimization problem (\ref{prbm}), $\tilde{R}$ is approximated as
\bea  \label{optub} \tilde{R}\approx\tilde{N}_0\ln\!\left(\!1+\frac{K^\alpha  \tilde{p}^{K-1} N_{K-1}N_K}{N_0}\!\right)\!. \eea
Note that, $\tilde{N}_0$ denotes the minimal values among all $N_k$ including $N_0$ and $N_K$, and $N_0$ is the number of users which is fixed in the optimizations.

Similarly, if we use the lower-bound in (\ref{prbm}), $\tilde{R}$ can be expressed as
\bea  \label{optlb} \tilde{R}\approx\tilde{N}_0\ln\!\left(\!1+\frac{K^\alpha  \tilde{p}^{K-1}}{\prod\limits_{k=0}^{K-2}N_k}\exp\Big(g-K\gamma\big)\!\right)\!, \eea
where
$$g= \frac{1}{\tilde{N}_0}\sum_{k=1}^{K}\sum_{\ell=1}^{\tilde{N}_0}\sum_{r=1}^{N_k-\ell}\frac{1}{r}. $$

As can be seen, both optimizations with the expressions in (\ref{optub}) and (\ref{optlb}) need to search over all possible partition sets of $M$, despite that the optimization (\ref{optub}) is slightly simple since only $\tilde{N}_0$ and $N_{K-1}$ need to be considered.

\subsection{The Proposed Optimization Routine}

To further reduce the search-size in the optimizations, we notice that there are two main principles to maximize the ergodic capacity $\tilde{R}$ for a given $\tilde{p}$:
\begin{enumerate}
\item $\tilde{N}_0$, the minimum of all $N_k$, shall be maximized.
\item $K$, the total number of tiers, shall be maximized to reduce the power attenuations.
\end{enumerate}
With the above two principles, for a given pair $(N_0, N_K)$, the optimal value of $K$ can be determined via (\ref{Ksub}), that is,
\bea  K=\max\left(1+\biggl\lfloor \frac{M}{\min\{N_0, N_K\}}\biggr\rfloor,2\right)\!.  \notag \eea
Then, what left is to find all possible partitions sets for the remainder
$$R=M-(K-1)\min\{N_0, N_K\},$$
if $R\!>\!0$. That is, denoting $(r_1, r_2,\cdots,r_t)$ as a partition set of $R$, the corresponding partition set of $M$ is set to
 \bea \label{subpart} N_k=\left\{\begin{array}{cc}\min\{N_0, N_K\}\quad & \quad 1\leq k\leq K-t-1,\\
 \min\{N_0, N_K\}+r_t\quad & \quad K-t\leq k\leq K-1 . \end{array}\right. \quad \eea
In total only $\#R$ partition sets need to be evaluated, which yields great search-size reduction, due to the fact that 
$$\#R\!\ll\!\#M.$$
For instance, letting $\min\{N_0, N_K\}\!=\!3$ and $M\!=\!16$, the number of partition sets $\#M\!=\!231$, while $\#R\!=\!1$ as $R\!=\!1$. That is, the optimal solution is directly given as the partition set $\{5, 5, 6\}$ of $M$, which is also aligned with the numerical simulation result in Fig. 7 shown later Sec. V.

To further reduce the complexity, a further simplification is to let $r_1\!=\!R$, which gives a suboptimal solution of (\ref{prbm}) directly as
 \bea \label{subpart1} N_k=\left\{\begin{array}{cc}\min\{N_0, N_K\}\qquad & \qquad 1\leq k\leq K-2,\\
 \min\{N_0, N_K\}+R\qquad & \qquad k=K-1 . \end{array}\right.\eea
The idea is to maximize $q$ in (\ref{qcon}) for a given $K\!-\!1$, or equivalently, minimize the term $\prod\limits_{k=0}^{K-2}N_k$ by adding the reminder $R$ onto $N_{K-1}$.

As an example, in Table I we list the optimal $K$ for $M\!=\!20$ and different values of ($N_0$, $N_K$). One disadvantage of the suboptimal solution  (\ref{subpart}) is that the impact of SNR is not taken into account. However, as shown later by simulation results, the suboptimal solution (\ref{subpart}) is close-to-optimal in a wide range of SNR values. At extreme low or high SNR values, the optimal solutions can also be derived as shown in the next. Therefore, a practical optimizing approach is to combine both (\ref{subpart}) and the asymptotic solutions.

\begin{table}[ht!]
\renewcommand{\arraystretch}{1.5}
\vspace{-0mm}
\centering
\caption{Optimal $K$ in (\ref{Ksub}) with $M\!=\!20$ and different ($N_0$, $N_K$).}
\label{tab1}
\vspace{-1mm}
\begin{tabular}{|c|c|c|c|c|c|c|c|c|}
\hline
 \backslashbox{$N_0$}{$N_K$} & 8 &16& 32 & 48& 64 &96&128&256 \\ \hline
2&11&11 &11&11&11&11 &11&11\\ \hline
 4&6&6&6&6&6&6&6&6\\ \hline
  6&4&4&4&4&4&4&4&4\\ \hline
   8&3&3&3&3&3&3&3&3 \\ \hline
    10&3&3&3&3&3&3&3&3\\ \hline
     12&3&2 &2&2&2&2&2&2 \\ \hline
      14&3&2&2&2&2&2&2&2 \\ \hline
       16&3&2 &2&2&2&2&2&2\\ \hline
\end{tabular}
\vspace{-2mm}
\end{table}

\subsection{Asymptotic Solutions and Practical Optimization Procedure}}
Under the case that $p\!\gg\! K$ is sufficiently large, it holds from both (\ref{optub}) and (\ref{optlb}) that
$$ \tilde{R}\approx\tilde{N}_0(K-1)\ln q.$$
In such case, the optimal number of tiers can be optimized through maximizing $\tilde{N}_0(K-1)$. For a given $\tilde{N}_0$, the maximal value of $K\!-\!1$ is $\bigl\lfloor M/\tilde{N}_0\bigr\rfloor$, and hence,
$$\max(\tilde{N}_0(K-1))\leq M,$$
which can be achieved by a partition set\footnote{However, a partition set with all 1\rq{}s is not a unique solution to achieve the maximum. For instance, $\tilde{N}_0\!=\!K\!-\!1\!=\!4$ is also optimal for $M\!=\!16$.}  with all $N_k\!=\!1$. This is to say, when SNR increases, setting the number of tiers to $M$ with each UAV-tier only containing a single UAV is close to optimal.

On the other hand, under the case that $p$ is sufficiently small, $p^{K-1}$ decreases as $K$ increases, and the optimal number of tiers is $K\!=\!2$, that is, using a single UAV-tier comprising all $M$ UAVs is close-to-optimal.

Note that, these conclusions are different from the observations in \cite{S04, OS06}, due to the fact that they assume TDMA transmission schemes and the ergodic capacity $\tilde{R}$ linearly decreases in $K$. However, similar optimizations for the number of tiers with TDMA transmissions can follow the same analysis shown above.

Combing the above discussions, a practical optimization procedure for the number of UAV-tier with low-complexity is to find the partition set that maximizes the lower-bound derived in (\ref{optlb}), and with the partition sets defined in (\ref{subpart}) and the two asymptotic settings for low and high SNR cases. Such an optimizing approach yields a significantly reduced search-size of ($\#R+2$), and is more robust against SNR changes.

\section{Numerical Results}

In this section, we show simulations results with the consider UAV-assisted communication systems and the Rayleigh product channels. We use various settings such that the previous elaborated properties can be clearly explained.

\subsection{Tightness of the Lower-bound}
In Fig. 3 and 4, we show comparisons between the derived bounds and the numerical results of the ergodic capacity $\tilde{R}$. In both cases we test with a UAV-assisted system with three tiers, i.e., $K\!=\!3$. In Fig. 3, we set $N_0\!=\!N_1\!=\!N_2\!=\!4$, and $N_3\!=\!8$, while in Fig. 4 we set $N_0\!=\!4$, $N_1\!=\!N_2\!=\!8$, and $N_3\!=\!16$, respectively. As can be seen, in both cases, the derived lower-bounds are much tighter than the traditional upper-bounds. Further, as $q$ increases, the lower-bounds become tight and converge to the exact $\tilde{R}$. Moreover, with larger values of $N_k$ such as in Fig. 4, the lower-bound is also tight even with small values of $q$. These results are well aligned with the derivations in Sec. III-B.

\subsection{Asymptotic Properties}
In Fig. 5, we show asymptotic properties of $\tilde{R}$ with numerical simulations. We evaluate $\tilde{R}$ for three different scenarios, but all with $K\!=\!4$, $N_0\!=\!3$, and $N_4\!=\!8$. In the first case, we set $N_1\!=\!N_2\!=\!4$, while in the second case we only increase $N_2\!=\!5$ and the others remain unchanged. According to Property~3, adding one extra antenna the increment of $\tilde{R}$ under high SNR equals
$$  \sum_{r=1}^{3}\frac{1}{r+1}\approx 1.08. $$
In the third case, we further increase both $N_2\!=\!5$ and $N_3\!=\!6$, the increment of $\tilde{R}$ over the first case under high SNR according to Property~3 now is
$$ 2\sum_{r=1}^{3}\frac{1}{r+1}+\sum_{r=1}^{3}\frac{1}{r+2}\approx 2.95. $$
As can be seen, these two values are well aligned with the numerical results shown in the lower part of Fig. 5, where we use the ergodic capacity of the latter two cases and subtract them from the first case, respectively.

\begin{figure}[t]
\vspace*{-3mm}
\begin{center}
\hspace*{-5mm}
\scalebox{0.42}{\includegraphics{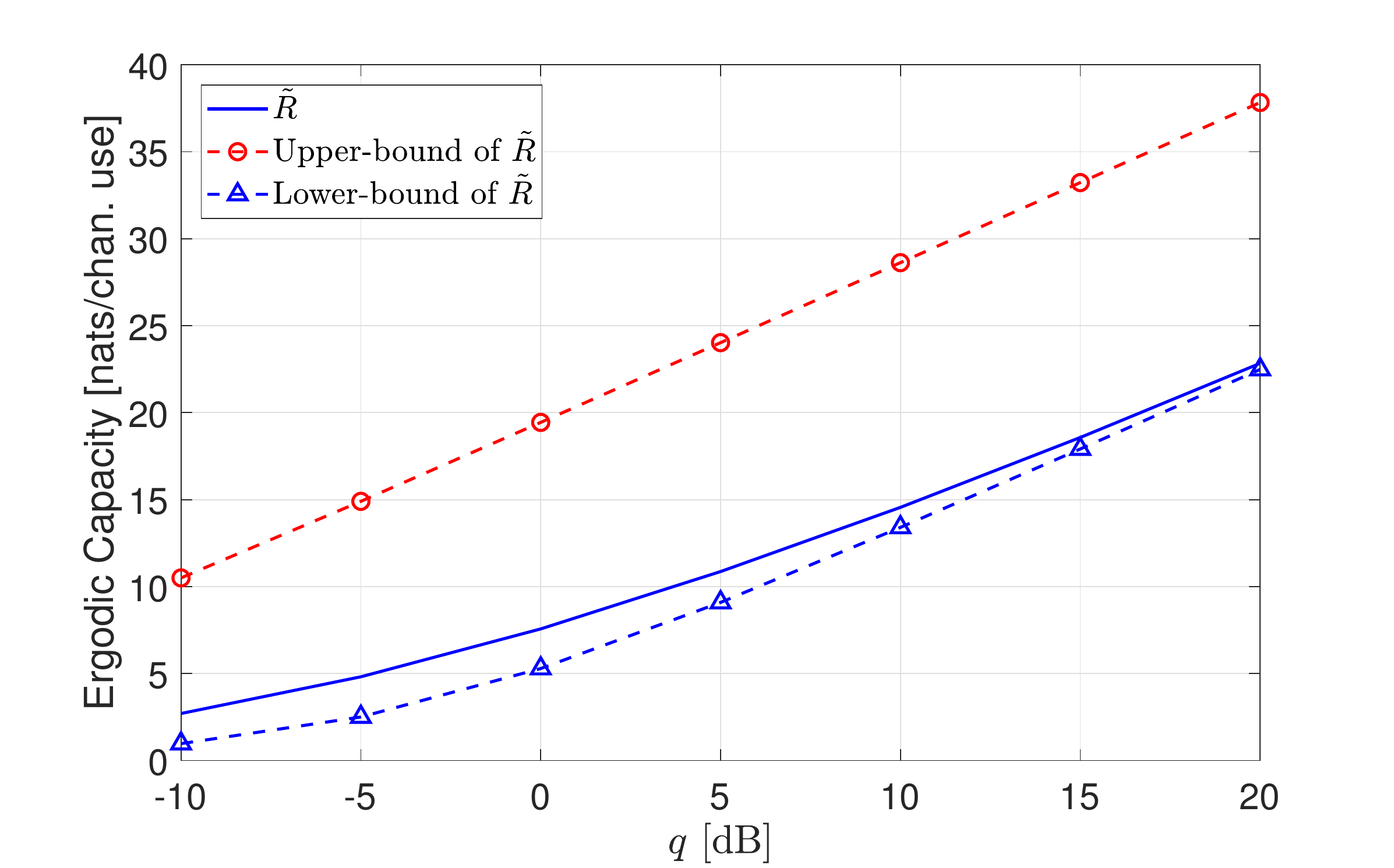}}
\vspace*{-8mm}
\caption{\label{fig3}The ergodic capacity with $K\!=\!3$, $N_0\!=\!N_1\!=\!N_2\!=\!4$, and $N_3\!=\!8$.}
\vspace*{-4mm}
\end{center}
\end{figure}

\begin{figure}
\vspace*{-2mm}
\begin{center}
\hspace*{-5mm}
\scalebox{0.42}{\includegraphics{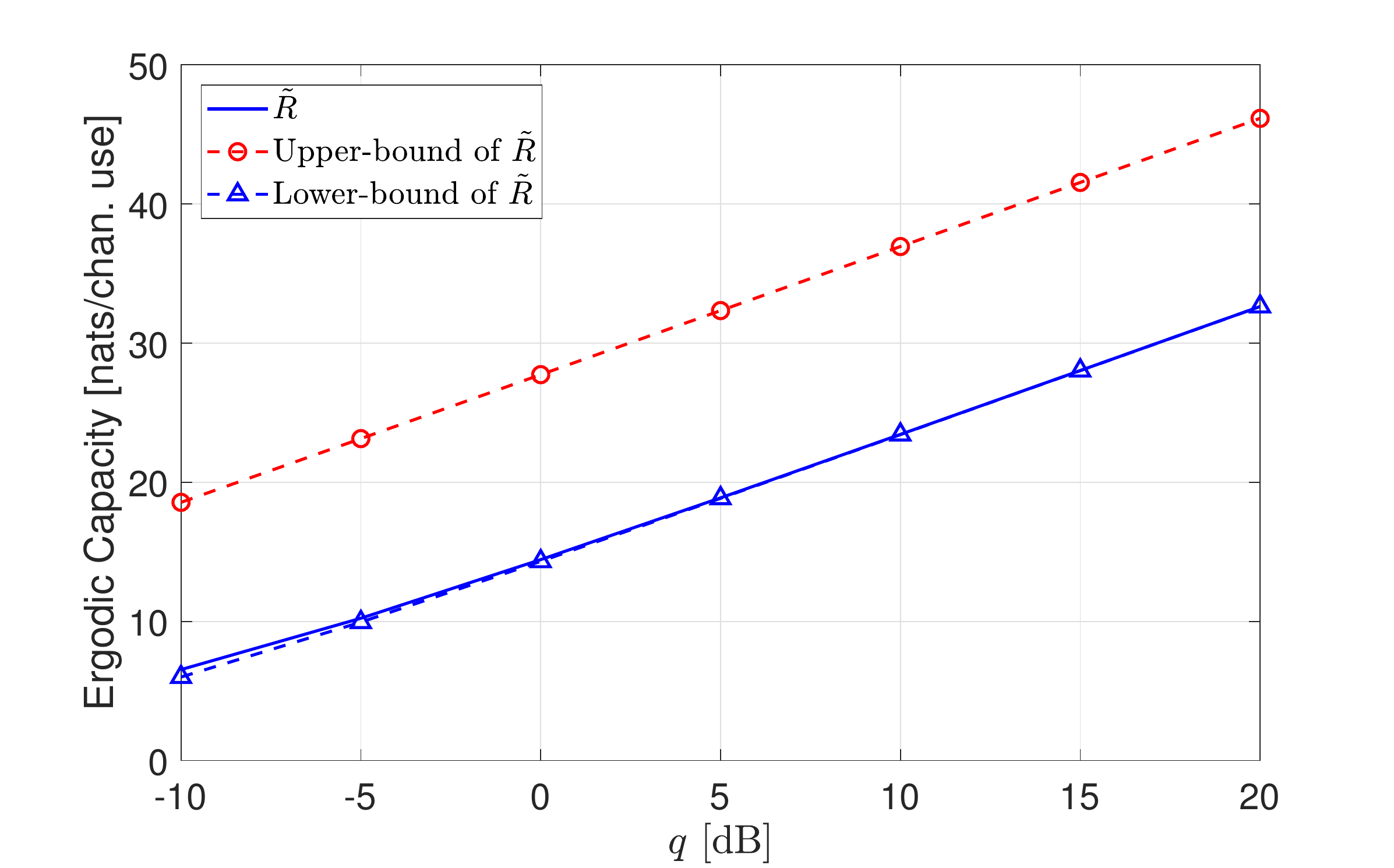}}
\vspace*{-8mm}
\caption{\label{fig4}The ergodic capacity with $K\!=\!3$, $N_0\!=\!4$, $N_1\!=\!N_2\!=\!8$, and $N_3\!=\!16$.}
\vspace*{-6mm}
\end{center}
\end{figure}

\begin{figure}[t]
\vspace*{0mm}
\begin{center}
\hspace*{-6mm}
\scalebox{0.33}{\includegraphics{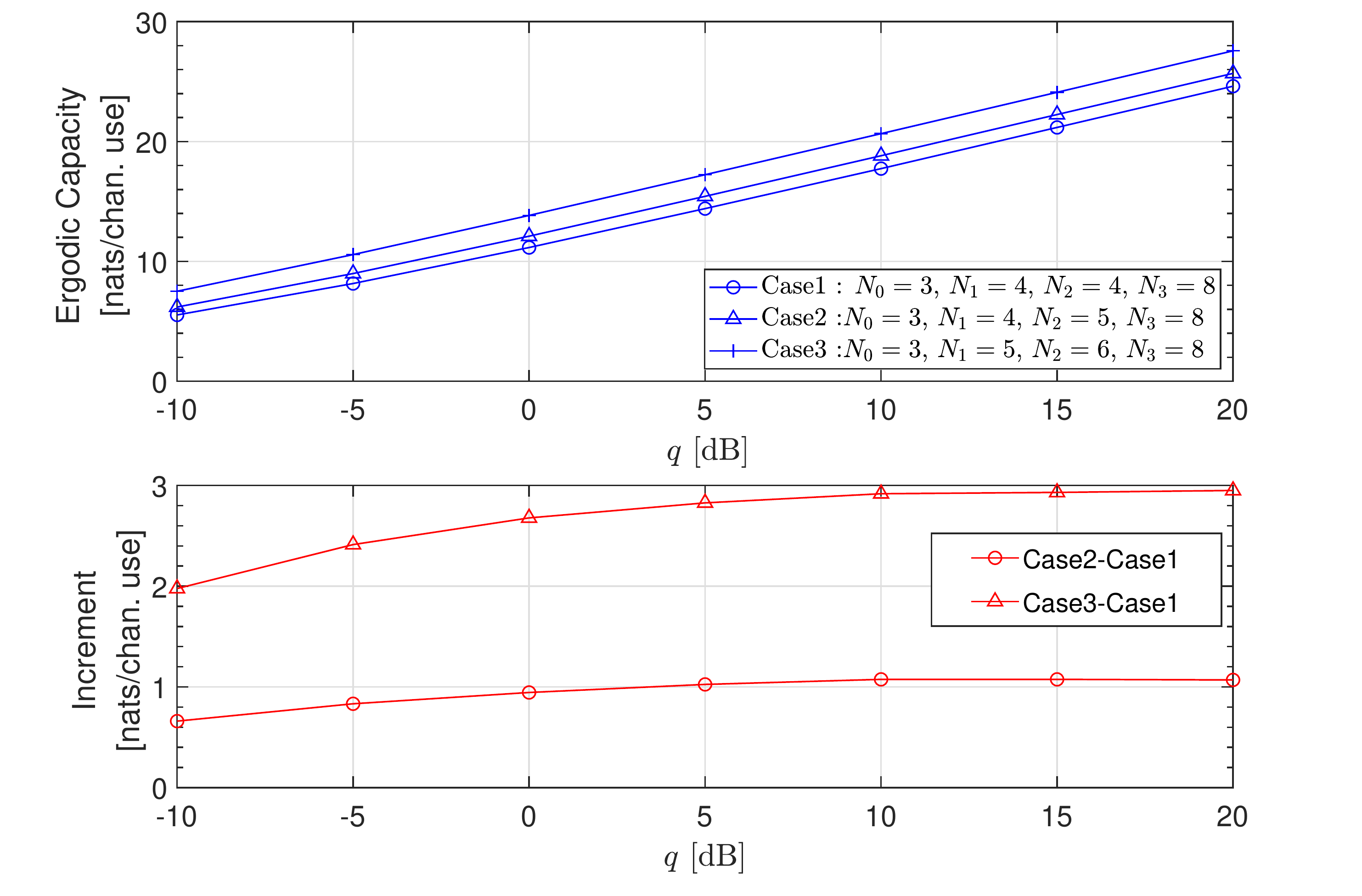}}
\vspace*{-8mm}
\caption{\label{fig5}The ergodic capacity increments with increasing the number of antennas of the UAVs.}
\vspace*{-4mm}
\end{center}
\end{figure}

\begin{figure}
\vspace*{-2mm}
\begin{center}
\hspace*{-6mm}
\scalebox{0.42}{\includegraphics{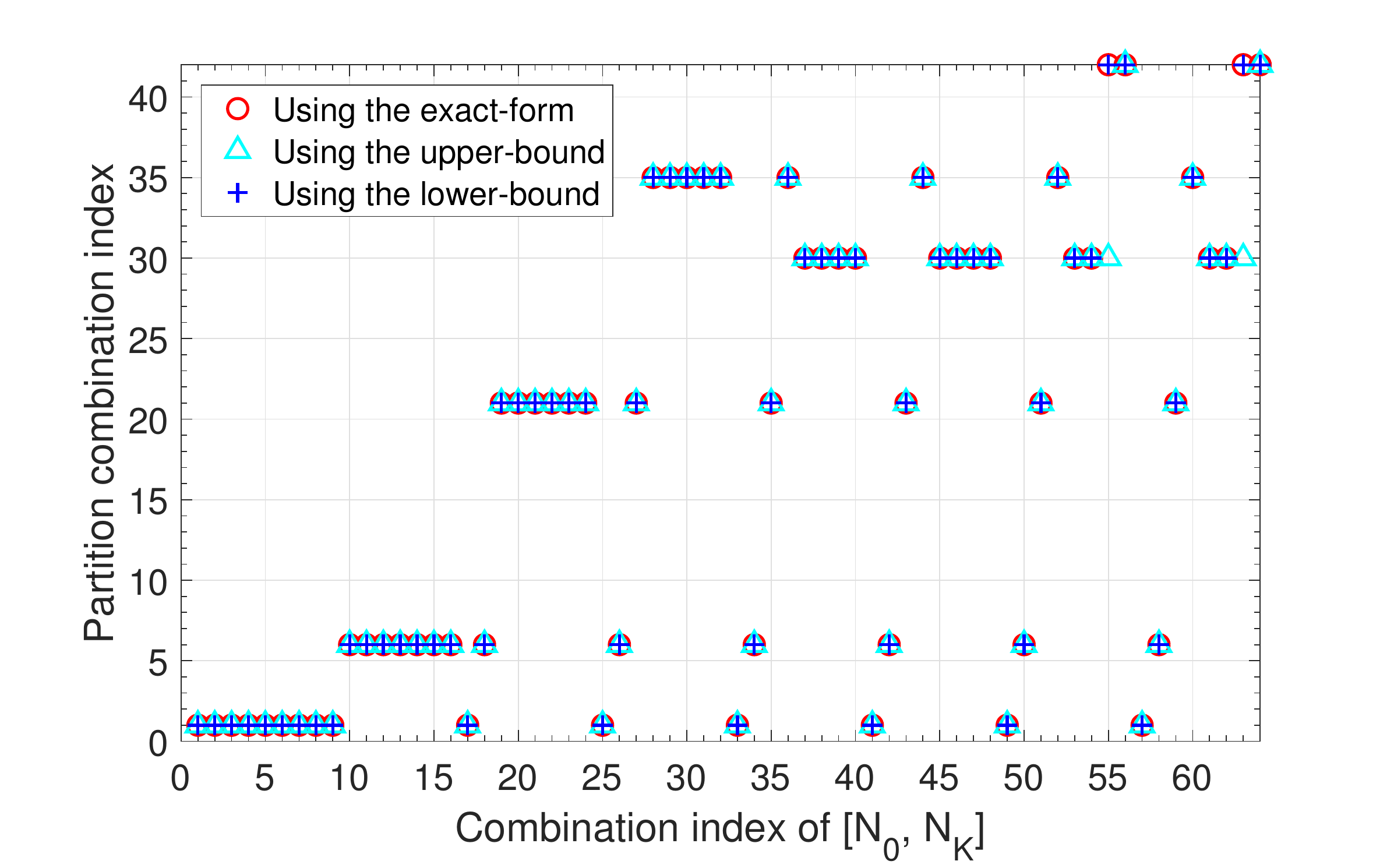}}
\vspace*{-8mm}
\caption{\label{fig6}The optimal partitions sets based on different formulas for the ergodic capacity.}
\vspace*{-6mm}
\end{center}
\end{figure}

\begin{figure}[t]
\vspace*{-2mm}
\begin{center}
\hspace*{-5mm}
\scalebox{0.42}{\includegraphics{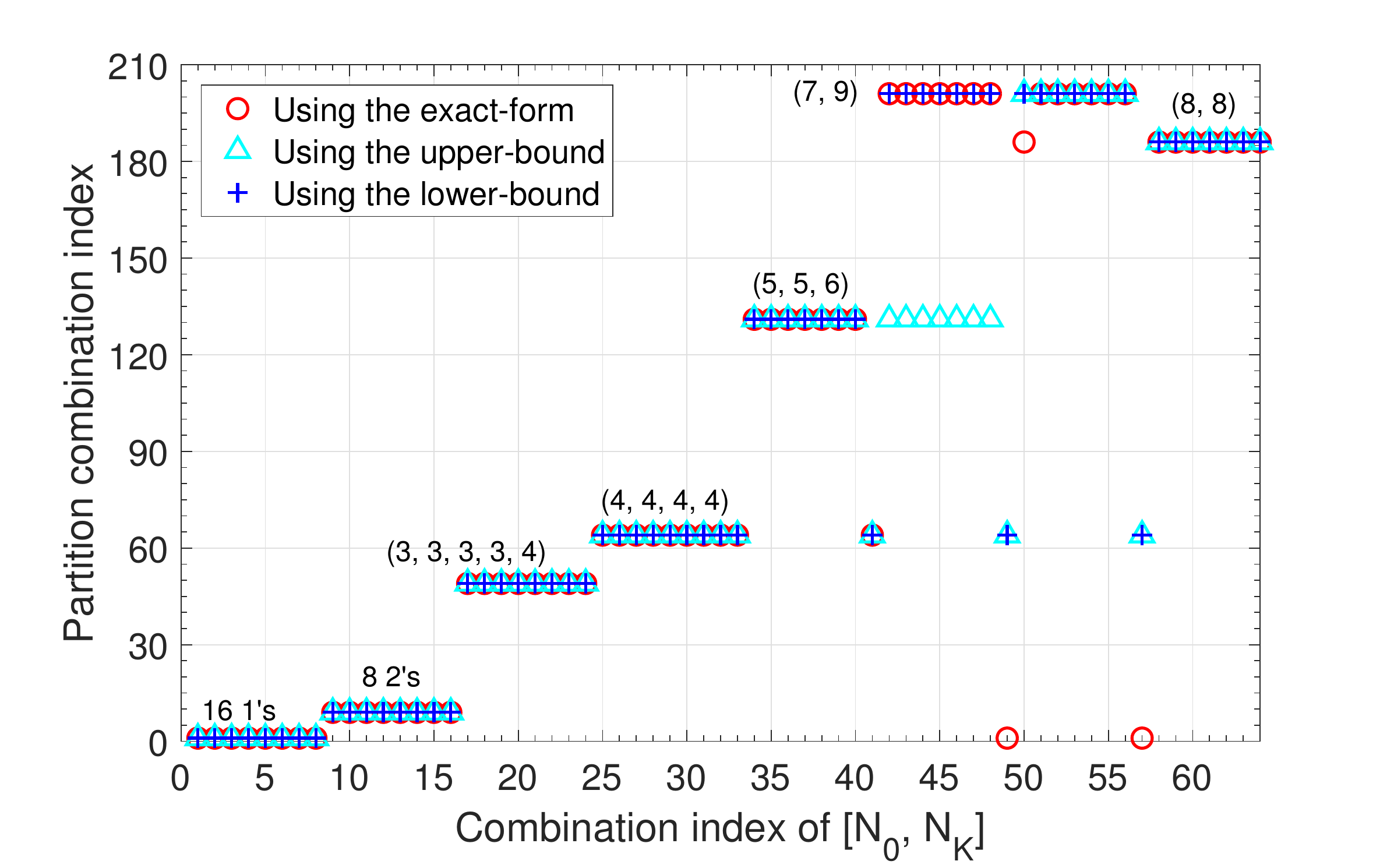}}
\vspace*{-8mm}
\caption{\label{fig7}The optimal partition sets based on different formulas for a large $M\!=\!16$.}
\vspace*{-6mm}
\end{center}
\end{figure}

\subsection{Number of Tiers Optimization}

In the remaining simulations followed, for simplicity we always assume $cp_0\!=\!1$ in (\ref{qcon}). In Fig.6, we test the number of tier optimizations using different formulas of $\tilde{R}$ with $p\!=\!20$ dB and an power-attenuation exponent $\alpha\!=\!2$. The total number of UAVs is $M\!=\!10$, and the all 42 partition sets are listed in Table~II (where the partition sets considered in the proposed scheme (\ref{subpart}) are marked in bold and italic font). We test the total 64 different combinations of $N_0$ and $N_K$, i.e., the number of antennas for the users and the BTS, respectively. For a given combination index $N$ ($1\!\leq\!N\!\leq\!64$) in the $x$-axis, the values of $N_0$ and $N_K$ are determined by
\bea \label{N0} N_0=\Bigl\lfloor \frac{N-1}{8}\Bigr\rfloor+1,\eea
and 
\bea \label{NK}  N_K\!=\!\!\!\!\!\mod(N-1,\;8)+1,\eea
respectively.

As can be seen, optimizing the number of tiers with the lower-bound yields identical outputs as the optimal case that uses the exact-form, while there are two discrepancies between the upper-bound and the exact-form based optimizations. Importantly, the optimal partition indexes only varies in 6 different indexes, 1, 6, 21, 30, 35, and 42, which correspond to the partition sets with all 1\rq{}s, all 2\rq{}s, \{3, 3, 4\}, \{5, 5\}, \{4, 6\}, and \{10\}, respectively. This is perfectly aligned with the proposed optimization routine stated in Sec. IV-B and (\ref{subpart}).

In Fig. 7, we test another case with $M\!=\!16$, $p\!=\!10$ dB, and $\alpha\!=\!3$. In this case, $\#M\!=\!231$ and we are not able to list all the partitioning sets. In this case the number of discrepancies between the upper-bound and the exact-form based optimization is 10, while that between the lower-bound and the exact-form is only 3. Moreover, as can be seen, the optimal index also only concentrated on 7 different indexes, 1, 9, 49, 64, 131, 186, and 201, which correspond to partition sets 16 1\rq{}s, 8 2\rq{}s, \{3, 3, 3, 3, 4\}, \{4, 4, 4, 4\}, \{5, 5, 6\}, \{8, 8\}, and \{7, 9\}, respectively, which are also well aligned with the proposed solution in (\ref{subpart}). This means that for the upper and lower bounds as well as exact-form based optimizations, checking the sets defined in (\ref{subpart}) provides the same results as evaluating all 231 possible sets, which has a significant complexity reduction. Furthermore, we see that the direct solution in (\ref{subpart1}) is suboptimal since the optimal partition set can be \{7, 9\} for both $N_0\!=\!6$ and 7, but (\ref{subpart1}) generates the solution for $N_0\!=\!6$ as \{6, 10\}.

In Fig. 8, we test the ergodic capacity $\tilde{R}$ with $M\!=\!8$, $N_0\!=\!4$, and $\alpha\!=\!2$. We compare the proposed solution in (\ref{subpart}), in this case, using two UAV-tiers with each containing 4 UAVs, with other heuristic schemes that use 8, 4, and 1 UAV-tiers, respectively. The values of $N_K$ changes from 8 to 64, and the power $q$ (in dB) is modeled as a random Gaussian variable with both mean and variance equal to 10. As can be seen, the partitioning set with the proposed solution quite close to the optimal scheme, with the latter one optimized individually for each SNR realization and with an exhaustive search over all 22 possible partitions. This verifies the validity of the proposed low-complexity optimization scheme with (\ref{subpart}) in practical applications.

\subsection{Asymptotic Solutions}
In Fig. 9 we test the ergodic capacity $\tilde{R}$ under different values of $p$, with $M\!=\!8$ and $\alpha\!=\!3$. We set $N_0\!=\!4$ and $N_K\!=\!8$. In this case, the number of total possible partition sets is 22. As can be seen, when $q\!=\!-15$ and 0 dB, the highest ergodic capacity is attained by a single UAV-tier, i.e., with the partition set \{8\}. When $p$ increases to 15 dB, the highest ergodic capacity is attained by partition set \{4,\,4\} with a two tiers. As $p$ further increases to 20 dB, the highest ergodic capacity now is attained by partition with all 1\rq{}s, i.e., 8 UAV-tiers and each tier contains only a single UAV. This is also well aligned with the analysis in Sec. IV-C.

\section{Summary}

In this paper, we have considered the ergodic capacity and tier optimization in UAV-assisted communication systems. With multiple tiers of UAVs, the channel between the users and the BTS can be modeled as a Rayleigh product channel. We then have derived a tight lower-bound for the ergodic capacity which is asymptotically tight in high SNR regime or with a large number of UAVs. Further, with the derived lower-bound, the ergodic capacity difference between different Rayleigh product channels can be easily computed. Furthermore, to maximize the ergodic capacity for a given total number of UAVs, we have proposed a low-complexity scheme to optimize the number of tiers in the UAV assisted systems with the derived lower-bound.

\begin{figure}[t]
\vspace*{-2mm}
\begin{center}
\hspace*{-6mm}
\scalebox{0.315}{\includegraphics{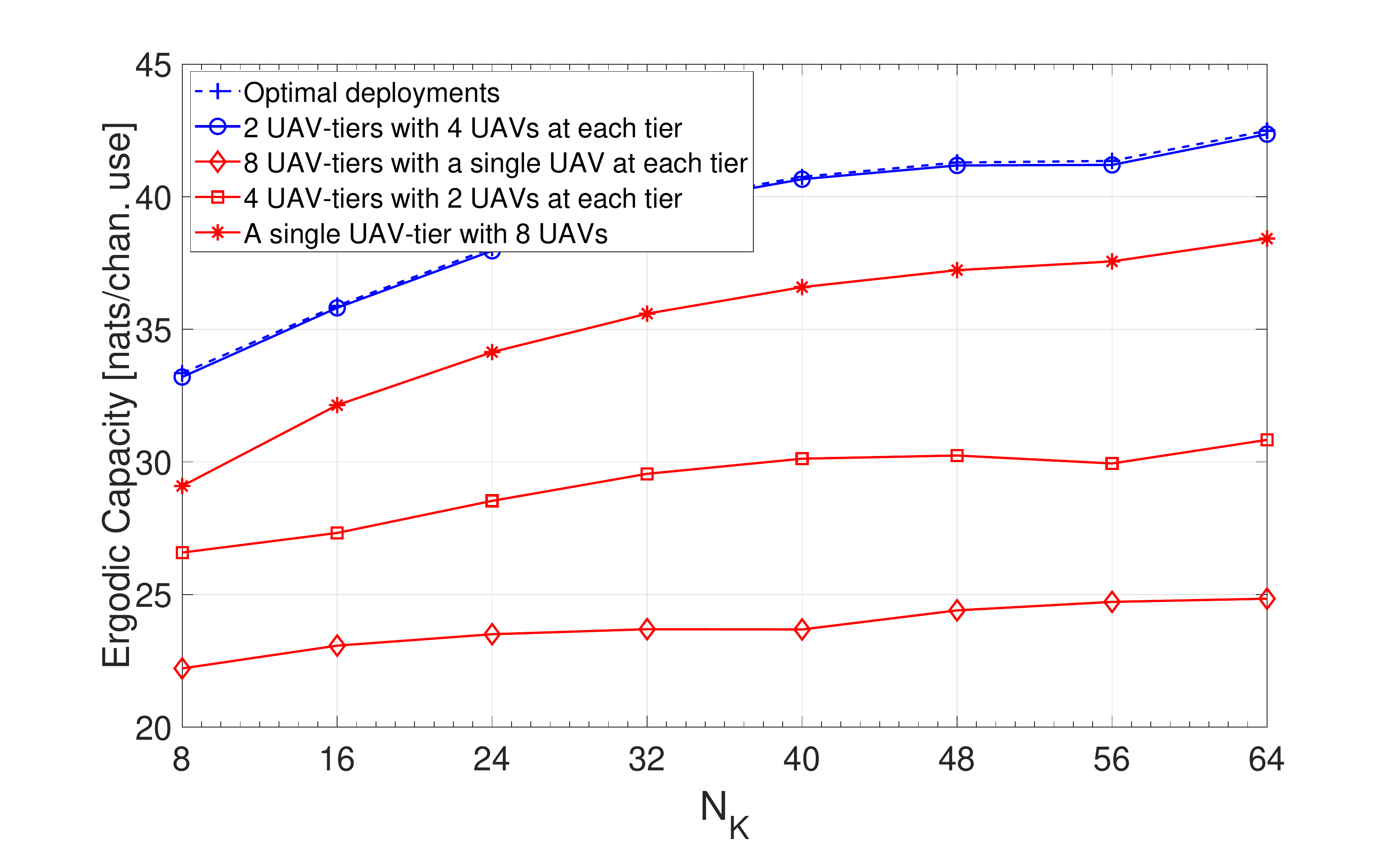}}
\vspace*{-8mm}
\caption{\label{fig8}The ergodic capacities obtained with different settings of UAV-tiers. The proposed solution in (\ref{subpart}) which is two UAV-tiers with each containing 4 UAVs, is close to the optimal scheme that is optimized individually for each SNR realization and over all possible partitions. }
\vspace*{-6mm}
\end{center}
\end{figure}

\begin{figure}[t]
\vspace*{-2mm}
\begin{center}
\hspace*{-5mm}
\scalebox{0.42}{\includegraphics{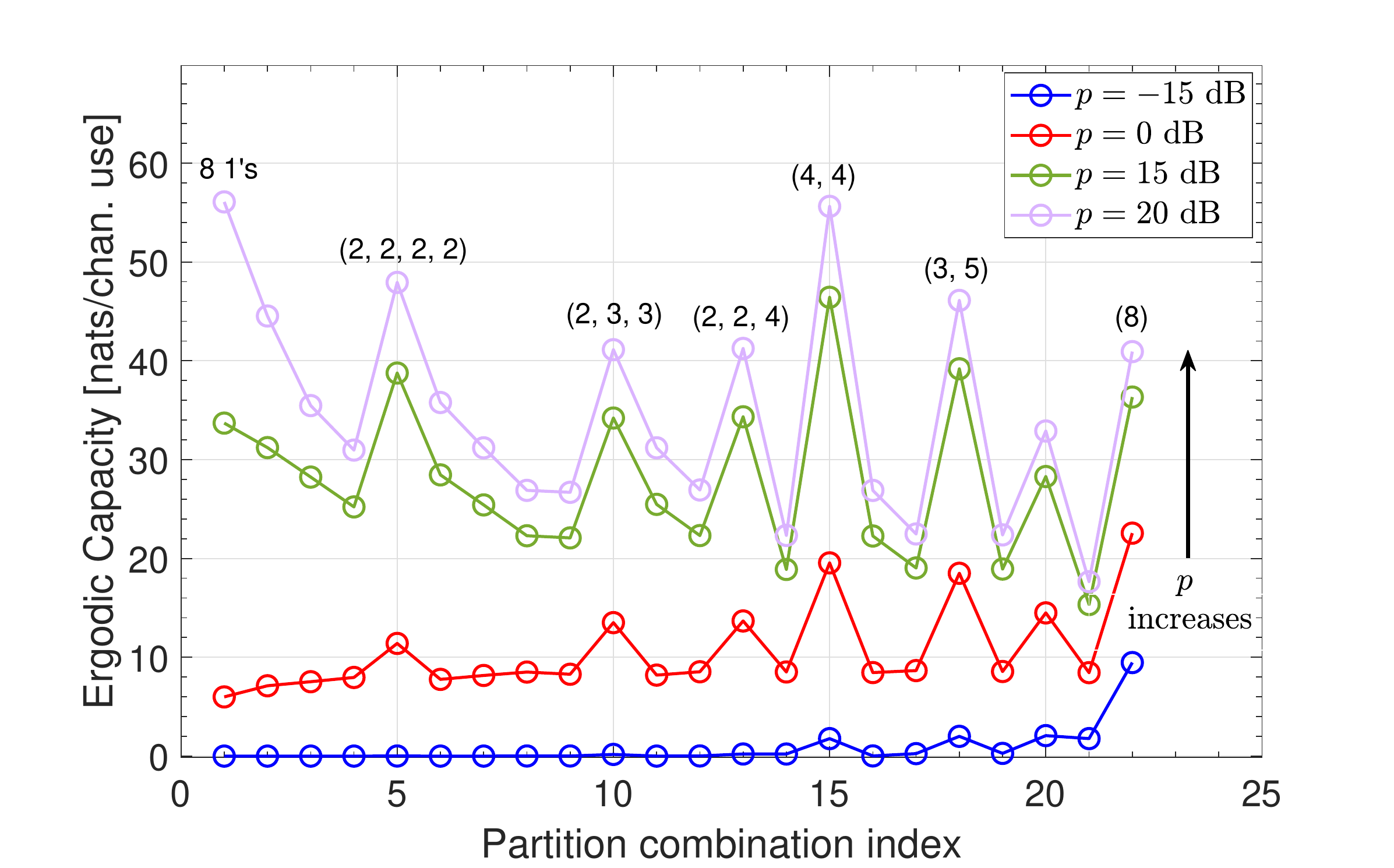}}
\vspace*{-8mm}
\caption{\label{fig9}The ergodic capacity with different partition sets for different values of $q$ with $N_0\!=\!4$, $N_K\!=\!8$, and $M\!=\!8$.}
\vspace*{-6mm}
\end{center}
\end{figure}

\begin{table*}[b]
\renewcommand{\arraystretch}{1.5}
\vspace{-0mm}
\centering
\caption{Partition Indexes and the correspondent combinations for $M\!=\!10$.}
\label{tab1}
\vspace{-1mm}
\begin{tabular}{|c|c|c|c|c|c|}
\hline
                 Index & Combination &Index& Combination & Index &Combination \\ \hhline{|=|=|=|=|=|=|} 
                 \bf{1}&\{ \bf{\emph{1     1     1     1     1     1     1     1     1     1}} \}&2    &\{ 1     1     1     1     1     1     1     1     2 \}  &3    &\{1     1     1     1     1     1     2     2 \}\\ \hline
                             4&\{1     1     1     1     2     2     2\}  &5   &\{ 1     1     2     2     2     2  \} &\bf{6}  &\{  \bf{\emph{2     2     2     2     2}}  \} \\ \hline
                               7 &\{ 1     1     1     1     1     1     1     3 \} &8 & \{ 1     1     1     1     1     2    3 \}    &9 &\{ 1     1     1    2     2    3 \} \\ \hline
  10&\{ 1     2   2     2    3 \}&11  &\{ 1     1     1     1   3    3 \}      &12  &\{ 1     1     2   3    3 \}  \\
 \hline
13& \{ 2     2   3    3 \} & 14 &\{ 1     3   3    3 \}  &15      &\{  1     1     1     1     1     1     4    \}  \\ \hline
16 &\{   1     1     1     1    2    4   \}  &17      &\{    1     1     2  2     4      \}  
&18& \{ 2     2   2    4 \}  \\ \hline   19 &\{  1     1     1     3     4   \}  &20      &\{  1     2     3     4      \}  
&\bf{21}& \{  \bf{\emph{3     3     4 }}  \} \\ \hline  22 &\{     1     1     4     4    \}  &23      &\{  2  4      4    \}  
&24& \{   1     1     1     1    5  \} \\ \hline  25 &\{ 1  1     1   2    5 \}  &26     &\{ 1 2  2    5    \}  
&27& \{ 1    1   3    5 \} \\ \hline 28 &\{ 2 3 5 \}  & 29      &\{  1 4 5   \}  
&\bf{30}& \{\bf{ \emph{5     5}} \} \\ \hline  31 &\{  1     1     1     1     6 \}  & 32      &\{  1     1    2  6    \} 
&33& \{ 2     2   6 \} \\ \hline  34 &\{ 1     3  6 \}  & \bf{35}      &\{ \bf{\emph{4 6}}   \}  
&36& \{ 1     1    1  7  \} \\ \hline 37 &\{ 1 2 7 \}  &38      &\{  3 7    \}  
&39& \{ 1 1 8\} \\ \hline  40 &\{ 2 8 \}  &41      &\{ 1 9 \} &\bf{42}      &\{ \bf{\emph{10}} \} \\  \hline
\end{tabular}
\vspace{-2mm}
\end{table*}

\section*{Appendix A: Proof of Property 1}
Arguments leading to Property 1 can be found in previous work \cite{RK14, AK13}. Here we provide a simpler and more straightforward proof. We first prove that switching any two adjacent parameters $N_i$ and $N_{i+1}$ will not change the ergodic capacity. That is, the ergodic capacities are the same with Rayleigh product channel of settings ($N_0, N_1, \cdots,N_K$) and ($N_0, N_1, \cdots,N_{i-1}, N_{i+1}, N_i, N_{i+2},\cdots,N_K$), where the parameters $N_i$ and $N_{i+1}$ ($0\!<\!i\!<\!K$) are switched. 

For the latter one, we let
\bea \label{Hmd1} \tilde{\vec{H}}\!&=&\!\left(\prod\limits_{k=1}^{i-1}\vec{Q}_k\right) \left(\tilde{\vec{Q}}_i\tilde{\vec{Q}}_{i+1}\tilde{\vec{Q}}_{i+2}\right)\left(\prod\limits_{k=i+3}^{K}\vec{Q}_k\right)\notag \\
\!&=&\!\vec{A}\left(\tilde{\vec{Q}}_i\tilde{\vec{Q}}_{i+1}\tilde{\vec{Q}}_{i+2}\right)\vec{B}, \eea
where $\vec{A}\!=\!\prod\limits_{k=1}^{i-1}\vec{Q}_k$ and $\vec{B}\!=\!\prod\limits_{k=i+3}^{K}\vec{Q}_k$. Since the statistic properties of $\vec{A}$ and $\vec{B}$ remain the same for these two different settings, it is sufficient to show that 
\bea \label{W1} \tilde{\vec{W}}=\tilde{\vec{Q}}_i\tilde{\vec{Q}}_{i+1}\tilde{\vec{Q}}_{i+2} \eea
also has the same statistic properties as
\bea \label{W2} \vec{W}=\vec{Q}_i\vec{Q}_{i+1}\vec{Q}_{i+2}. \eea
The difference between (\ref{W1}) and (\ref{W2}) is that $\tilde{\vec{Q}}_i$, $\tilde{\vec{Q}}_{i+1}$, and $\tilde{\vec{Q}}_{i+2}$ are with dimensions $N_{i-1}\!\times\!N_{i+1}$, $N_{i+1}\!\times\!N_i$, and $N_i\!\times\!N_{i+2}$; while  $\vec{Q}_i$, $\vec{Q}_{i+1}$, and $\vec{Q}_{i+2}$ are with dimensions $N_{i-1}\!\times\!N_i$, $N_i\!\times\!N_{i+1}$, and $N_{i+1}\!\times\!N_{i+2}$, respectively.

Denoting ${\vec{a}}(m,n)$ as the element on the $m$th row and $n$th column of a matrix $\vec{A}$, the element $\tilde{\vec{w}}(m,n)$ of $\tilde{\vec{W}}$ equals
\bea \label{w1} \tilde{\vec{w}}(m,n)=\!\!\sum_{s=0}^{N_{i+1}-1}\!\sum_{r=0}^{N_i-1}\tilde{\vec{q}}_i(m,s)\tilde{\vec{q}}_{i+1}(s,r)\tilde{\vec{q}}_{i+2}(r,n),\;\eea
and the element $\vec{w}(m,n)$ of $\vec{W}$ equals
\bea \label{w2} \vec{w}(m,n)=\!\!\sum_{s=0}^{N_i-1}\!\sum_{r=0}^{N_{i+1}-1}\vec{q}_i(m,s)\vec{q}_{i+1}(s,r)\vec{q}_{i+2}(r,n),\;\eea
respectively. Since $\vec{Q}_k$ and $\tilde{Q}_k$ are Rayleigh MIMO channels, elements in $\vec{q}_k$ and $\tilde{\vec{q}}_k$ are complex Gaussian variables with zero-mean and unit-variance. Therefore, $\vec{w}(m,n)$ and $\tilde{\vec{w}}(m,n)$ have the same pdf. Further, since $\vec{W}$ and $\tilde{\vec{W}}$ have the same size $N_{i-1}\!\times\!N_{i+2}$, they also have identical pdf. That is to say,  $\vec{H}$ and $\tilde{\vec{H}}$ have the same statistical properties and the ergodic capacities of them are the same.

Following similar discussions, we can also show that the above conclusion holds for the cases $i\!=\!0$ (switching $N_0$ and $N_1$) and $i\!=\!K-1$ (switching $N_{K-1}$ and $N_K$). Hence, we conclude that switching any two adjacent parameters in ($N_0, N_1, \cdots,N_K$) will not change the ergodic capacity. Since any permutation of ($N_0, N_1, \cdots,N_K$) can be decomposed as a constitution of operations that switching the order of two adjacent parameters, the Property 1 thusly holds.

One remark from the above proof is that Property 1 holds for a general condition that all elements in all matrices $\vec{Q}_k$ are i.i.d. (but not necessary Gaussian).

\section*{Appendix B: Proof of Lemma 1}
We prove Theorem 1 by deduction. Firstly, we assume the SVD decomposition
\bea \label{svd1}  \vec{Q}_1= \vec{U}\rmh\vec{\Lambda}\vec{V}\rmh, \eea
where $\vec{U}$ and $\vec{V}$ are unitary matrices with dimensions $N_1\!\times\!N_1$ and $N_0\!\times\!N_0$, respectively. The matrix $\vec{\Lambda}$ has dimensions $N_1\!\times\!N_0$ and the last $N_1\!-\!N_0$ diagonal elements are 0s. That is
\[
  \vec{\Lambda} =
  \begin{bmatrix}
    \hat{\vec{\Lambda}}  \\
    \vec{0}_{(N_1-N_0)\times N_0}\\
 
  \end{bmatrix}\!,
\]
where $\hat{\vec{\Lambda}}$ is diagonal and with dimensions $N_0\!\times\!N_0$.

Letting
$$ \vec{A}=\prod\limits_{k=3}^{K}\vec{Q}_k,$$
it holds that
\bea \label{ab2} &&\!\!\!\!\!\!\!\!\!\!\!\!\!\!\!\!\!\!\!\!\mathbb{E}\!\left[\ln\det\!\left(\!\vec{H}\rmh\vec{H}\right)\right]\!\qquad \qquad \notag \\
 &&\!\!\!\!\!\!\!\!\!\!\!\!\!\!\!\!\!\!\!=\mathbb{E}_{\{\vec{Q}_1,\,\vec{Q}_2,\,\cdots,\,\vec{Q}_K\}}\!\!\left[\ln\det\!\left(\vec{Q}_1\rmh\vec{Q}_2\rmh\vec{A}\rmh\vec{A}\vec{Q}_2\vec{Q}_1\right)\!\right]\!.  \eea
Inserting (\ref{svd1}) back into (\ref{ab2}) yields
 \bea \label{ab3} &&\!\!\!\!\!\!\!\!\mathbb{E}\!\left[\ln\det\!\left(\!\vec{H}\rmh\vec{H}\right)\right]\!\qquad \qquad \notag \\
 &&\!\!\!\!\!\!\!=\mathbb{E}_{\{\vec{\Lambda},\,\vec{V},\,\tilde{\vec{Q}}_2,\,\cdots,\,\vec{Q}_K\}}\!\!\left[\ln\det\!\left(\!\vec{V}\vec{\Lambda}\rmh\tilde{\vec{Q}}_2\rmh\vec{A}\rmh\vec{A}\tilde{\vec{Q}}_2\vec{\Lambda}\vec{V}\rmh\right)\!\right]\! \notag \\
  &&\!\!\!\!\!\!\!=\mathbb{E}_{\{\vec{\Lambda},\,\tilde{\vec{Q}}_2,\,\cdots,\,\vec{Q}_K\}}\!\!\left[\ln\det\!\left(\!\vec{\Lambda}\rmh\tilde{\vec{Q}}_2\rmh\vec{A}\rmh\vec{A}\tilde{\vec{Q}}_2\vec{\Lambda}\right)\!\right]\! \notag \\
   &&\!\!\!\!\!\!\!\overset{(a)}{=}\mathbb{E}_{\{\vec{\Lambda},\,\vec{Q}_2,\,\cdots,\,\vec{Q}_K\}}\!\!\left[\ln\det\!\left(\!\vec{\Lambda}\rmh\vec{Q}_2\rmh\vec{A}\rmh\vec{A}\vec{Q}_2\vec{\Lambda}\right)\!\right]\! \notag \\
    &&\!\!\!\!\!\!\!\overset{(b)}{=}\mathbb{E}_{\{\hat{\vec{\Lambda}},\,\hat{\vec{Q}}_2,\,\cdots,\,\vec{Q}_K\}}\!\!\left[\ln\det\!\left(\!\hat{\vec{\Lambda}}\rmh\hat{\vec{Q}}_2\rmh\vec{A}\rmh\vec{A}\hat{\vec{Q}}_2\hat{\vec{\Lambda}}\right)\!\right]\! \notag \\
        &&\!\!\!\!\!\!\!=\mathbb{E}_{\{\hat{\vec{\Lambda}}\}}\!\!\left[\ln\det\!\left(\!\hat{\vec{\Lambda}}\hat{\vec{\Lambda}}\rmh\right)\!\right]\!+\!\mathbb{E}_{\{\hat{\vec{Q}}_2,\,\cdots,\,\vec{Q}_K\}}\!\!\left[\ln\det\!\left(\!\hat{\vec{Q}}_2\rmh\vec{A}\rmh\vec{A}\hat{\vec{Q}}_2\right)\!\right]\!, \notag \\
     &&\!\!\!\!\!\!\!\overset{(c)}{=}\mathbb{E}_{\{\vec{Q}_1\}}\!\!\left[\ln\det\!\left(\!\vec{Q}_1\rmh\vec{Q}_1\right)\!\right]  \notag \\ 
     &&+\mathbb{E}_{\{\hat{\vec{Q}}_2,\,\cdots,\,\vec{Q}_K\}}\!\!\left[\ln\det\!\left(\!\hat{\vec{Q}}_2\rmh\vec{A}\rmh\vec{A}\hat{\vec{Q}}_2\right)\!\right]\!, \!\!\! \eea
where
 $$\tilde{\vec{Q}}_2=\vec{U}\rmh\vec{Q}_2,$$
and $\hat{\vec{Q}}_2$ denotes the submatrix of $\tilde{\vec{Q}}_2$ obtained by by removing the last $N_1\!-\!N_0$ columns, which is a Rayleigh channel with dimensions dimensions $N_2\!\times\!N_0$.

The equation \lq{}(a)\rq{} holds because $\tilde{\vec{Q}}_2$ and $\vec{Q}_2$ has the same statistic properties since $\vec{U}$ is unitary, and \lq{}(b)\rq{} holds since the diagonal matrix $\vec{\Lambda}$ has the last $N_1\!-\!N_0$ diagonal elements as 0s. The equation \lq{}(c)\rq{} holds due to the fact that
$$\mathbb{E}\!\left[\ln\det\!\left(\!\vec{Q}_1\rmh\vec{Q}_1\right)\!\right]\!=\mathbb{E}\!\left[\ln\det\!\left(\!\hat{\vec{\Lambda}}\hat{\vec{\Lambda}}\rmh\right)\!\right]\!.$$ 

Since the last term in (\ref{ab3}) can be equivalently rewritten as
$$\mathbb{E}_{\{\hat{\vec{Q}}_2,\,\cdots,\,\vec{Q}_K\}}\!\!\left[\ln\det\!\left(\!\hat{\vec{Q}}_2\rmh\vec{A}\rmh\vec{A}\hat{\vec{Q}}_2\right)\!\right]\!=\mathbb{E}\!\left[\ln\det\!\left(\!\hat{\vec{H}}\rmh\hat{\vec{H}}\right)\right]\!,$$
where 
$$\hat{\vec{H}}=\hat{\vec{Q}}_2\vec{A}$$
is a Rayleigh product channel with  $K\!-1\!$ components and with parameter settings $(N_0,\,N_2,\,\cdots,\,N_K)$. Following the same analysis as in (\ref{ab3}) it holds that
\bea \mathbb{E}\!\left[\ln\det\!\left(\!\hat{\vec{H}}\rmh\hat{\vec{H}}\right)\right]\!&=&\!\mathbb{E}_{\{\hat{\vec{Q}}_2\}}\!\!\left[\ln\det\!\left(\!\hat{\vec{Q}}_2\rmh\hat{\vec{Q}}_2\right)\!\right]  \notag \\ 
     &&+\mathbb{E}_{\{\hat{\vec{Q}}_3,\,\cdots,\,\vec{Q}_K\}}\!\!\left[\ln\det\!\left(\!\hat{\vec{Q}}_3\rmh\vec{B}\rmh\vec{B}\hat{\vec{Q}}_3\right)\!\right]\!, \notag \\ \eea
where $\hat{\vec{Q}}_3$ is a Rayleigh channel with dimensions $N_3\!\times\!N_0$ and
$$ \vec{B}=\prod\limits_{k=4}^{K}\vec{Q}_k.$$
Repeat such a process it can be shown that
$$\mathbb{E}\!\left[\ln\det\!\left(\!\vec{H}\rmh\vec{H}\right)\right]\!=\! \sum_{k=1}^{K}\mathbb{E}\!\left[\ln\det\!\left(\!\hat{\vec{Q}}_k\rmh\hat{\vec{Q}}_k\right)\!\right]\!, $$
where $\hat{\vec{Q}}_k$ are Rayleigh channel with dimensions $N_k\!\times\!N_0$, which proves Lemma 1.

\section*{Appendix C: Proof of Property 2}

Applying Minkowski\rq{}s inequality \cite{HJ85} for $N_0\!\times\!N_0$ positive definite matrix,
\bea \big(\det(\vec{A}+\vec{B})\big)^{1/N_0}\geq \big(\det\vec{A}\big)^{1/N_0}+\big(\det\vec{B}\big)^{1/N_0}, \notag \eea
the ergodic capacity $\tilde{R}$ in (\ref{tR}) satisfies
{\setlength\arraycolsep{2pt} \bea \label{tRab1} \tilde{R}&\geq&N_0\mathbb{E}\!\left[\ln\!\left(1+q\big(\!\det(\vec{H}\vec{H}\rmh)\big)^{1/N_0}\right)\!\right]\notag \\
&=&N_0\mathbb{E}\!\left[\ln\!\left(1+q\exp\!\left(\frac{1}{N_0}\ln\det(\vec{H}\vec{H}\rmh)\right)\!\right)\!\right].\eea}
\hspace{-1.2mm}By Jessen\rq{}s inequality, it holds from (\ref{tRab1}) that
\bea \label{tRab2} \tilde{R}\geq N_0\ln\!\left(1+q\exp\!\left(\frac{1}{N_0}\mathbb{E}\!\left[\ln\det(\vec{H}\vec{H}\rmh)\right]\right)\!\right).\eea
Noticing that with Rayleigh product channel $\vec{H}$, it holds from Theorem 1 that
\bea \label{c3} \mathbb{E}\left[ \ln\det(\vec{H}\vec{H}\rmh)\right]=N_0(g-\gamma K) ,  \eea
where $g$ is given in (\ref{gK}). Inserting (\ref{c3}) back into (\ref{tRab2}), the ergodic capacity is lower bounded as
 \bea \label{tRab3} \tilde{R}\geq N_0\ln\!\left(1+q\exp\big(g-\gamma K\big)\right)\!. \notag \eea

\section*{Appendix D: Proof of Property 3}

From Theorem 1, at high SNR the ergodic capacity difference, between the Rayleigh product channel with settings ($N_0,\,N_1,\,\cdots,N_K$) and $N_k\!=\!N_0$ for all $0\!\leq\!k\!\leq\!K$, equals
\bea \Delta\tilde{R}=\sum_{k=1}^{K}\sum_{\ell=1}^{N_{0}}\psi(N_k-\ell+1)-\sum_{k=1}^{K}\sum_{\ell=1}^{N_{0}}\psi(N_0-\ell+1) . \notag \eea
Using the identity \cite{LN13}
$$ \psi(x+1)=\frac{1}{x}+ \psi(x),$$
it holds that
\bea \Delta\tilde{R}= \sum_{k=1}^{K}\sum_{\ell=1}^{N_{0}}\sum_{s=N_0}^{N_k-1}\frac{1}{s-\ell+1}. \notag \eea

\end{document}